\documentclass[10pt,twoside,leqno]{article}
\usepackage[centertags]{amsmath}
\usepackage{amsthm}
\usepackage{amssymb}
\usepackage{url}
\usepackage{color}
\usepackage[
pdftitle={Lie remarkable ODEs},%
pdfauthor={G. Manno, F. Oliveri, G. Saccomandi, R. Vitolo},%
pdfkeywords={differential equations, point symmetries}]{hyperref}


\newtheorem{theorem}{Theorem}
\newtheorem{corollary}[theorem]{Corollary}

\newtheorem{proposition}[theorem]{Proposition}
\theoremstyle{definition}
\newtheorem{remark}[theorem]{Remark}
\newtheorem{definition}[theorem]{Definition}
\newtheorem{example}[theorem]{Example}


\setlength{\hfuzz}{3pt}
\setlength{\headheight}{2pt} \setlength{\headsep}{29pt}
\setlength{\footskip}{28pt} \setlength{\textwidth}{540pt}
\setlength{\textheight}{636pt} \setlength{\marginparsep}{1pt}
\setlength{\marginparpush}{1pt} \setlength{\oddsidemargin}{-41pt}
\setlength{\marginparwidth}{55pt} \setlength{\evensidemargin}{-41pt}
\setlength{\topmargin}{-15pt} \setlength{\footnotesep}{8.4pt}
\DeclareFontFamily{U}{UWCyr}{}
\DeclareFontShape{U}{UWCyr}{m}{n}{%
  <5> <6> <7> <8> <9>
  <10> <10.95> <12> <14.4> <17.28> <20.74> <24.88> wncyr10
  }{}
\DeclareMathAlphabet{\cyrm}{U}{UWCyr}{m}{n}
\DeclareSymbolFont{cyrm}{U}{UWCyr}{m}{n}
\DeclareSymbolFontAlphabet{\cyrm}{cyrm}
\DeclareMathSymbol{\Evo}{\cyrm}{cyrm}{"03}
\DeclareMathOperator{\byd}{\raisebox{-.2ex}{$\overset{\text{\tiny
def}}{=}$}}

\DeclareMathOperator{\sym}{sym}

\DeclareMathOperator{\rank}{rank}

\newcommand{\cprime}{\/{\mathsurround=0pt$'$}}

\newcommand{\R}{\mathbb{R}}

\newcommand{\cC}{\mathcal{C}}
\newcommand{\cE}{\mathcal{E}}

\newcommand{\cA}{\mathcal{A}}
\newcommand{\cI}{\mathcal{I}}
\newcommand{\cP}{\mathcal{P}}

\newcommand{\abs}[1]{\lvert#1\rvert}

\newcommand*{\pd}[2]{\mathchoice{\frac{\partial #1}{\partial #2}}
  {\partial #1/\partial #2}{\partial #1/\partial #2} {\partial%
  #1/\partial #2}}
\newcommand{\eval}[2][\right]{\relax
  \ifx#1\right\relax \left.\fi#2#1\rvert}

\let\abs=\envert

\providecommand{\href}[2]{#2} 

\allowdisplaybreaks[3]

\begin{document}
\title{Ordinary differential equations described by their Lie symmetry algebra\footnote{Work supported by GNFM, GNSAGA, Universit\`a di Messina,
    Universit\`a di Perugia, Universit\`a del Salento.}}

\author{Gianni Manno\thanks{INdAM -- Politecnico di Milano, Dipartimento di Matematica ``Francesco Brioschi'', via Bonardi 9, Milano, Italy,
    \texttt{giovanni.manno@polimi.it}} , Francesco
  Oliveri\thanks{Dipartimento di Matematica e Informatica, Universit\`a di
    Messina, Viale F. Stagno d'Alcontres 31, 98166 Messina, Italy,
    \texttt{francesco.oliveri@unime.it}} , Giuseppe
  Saccomandi\thanks{Dipartimento di Ingegneria Industriale, Universit\`{a}
    degli Studi di Perugia, 06125 Perugia, Italy,
    \texttt{saccomandi@mec.dii.unipg.it}} , Raffaele Vitolo\thanks{Dipartimento
    di Matematica ``E. De Giorgi'', Universit\`a del Salento, via per Arnesano,
    73100 Lecce, Italy, \texttt{raffaele.vitolo@unisalento.it}} }

\date{
\framebox{
\begin{minipage}{5.5cm}
\begin{center}
\small
Published in \emph{J. Geom.\ Phys.},
\\
 \textbf{85} (2014), 2--15.
\end{center}
\end{minipage}
}
}

\maketitle

\begin{abstract}
  The theory of Lie remarkable equations, \emph{i.e.}, differential equations
  characterized by their Lie point symmetries, is reviewed and applied to
  ordinary differential equations. In particular, we consider some relevant Lie
  algebras of vector fields on $\mathbb{R}^k$ and characterize Lie remarkable
  equations admitted by the considered Lie algebras.
\end{abstract}

\smallskip
\noindent\textbf{Keywords:} Symmetries, ordinary differential equations.

\smallskip
\noindent\textbf{MSC 2010 classification}: 58J70, 58A20.

\section{Introduction}

In the context of the geometric theory of symmetries of (systems of)
differential equations (DEs)
\cite{BlumanKumei:book,Many99,Ibragimov,Oli10,Olver1}, a natural problem is to
see when a DE, either partial (PDE) or ordinary (ODE), is uniquely determined
by its Lie algebra of point symmetries. The core of this paper is to
investigate the inverse problem in the context of ODEs: given a Lie algebra
$\mathfrak{s}$ of vector fields, how to construct ODEs having $\mathfrak{s}$ as
a Lie point symmetry subalgebra and satisfying some specific properties, which
will be clarified below, that ensure the uniqueness of such DE.  The idea of
describing DEs admitting a given Lie algebra of symmetries dates back at least
to S. Lie, who stated that $u_{xx}=0$ is the unique scalar $2^{\textrm{nd}}$ order
ODE, up to point transformations, admitting an $8$-dimensional Lie algebra of
symmetries.  Of course, a similar idea also applies to PDE: for instance, in
\cite{Rosenhausoneone} (see also \cite{Rosenhaus1,Rosenhaus2}), the author
proved that the only scalar $2^{\textrm{nd}}$ order PDE, with an unknown function and two
independent variables, admitting the Lie algebra of projective vector fields of
$\mathbb{R}^3$ as Lie point symmetry subalgebra is the Monge-Amp\`ere equation
$u_{xx}u_{yy}-u_{xy}^2=0$. The above idea plays a central role also in gauge
theories, where one wants to obtain information on differential operators
possessing a prescribed algebra of symmetries. The results of \cite{MPV} go in
this direction: $2^{\textrm{nd}}$ order field equations possessing translational and gauge
symmetries and the corresponding conservation laws (via Noether theorem) are
always derivable from a variational principle.

The standard procedure (also used in \cite{Rosenhausoneone}) for obtaining a
scalar DE admitting a prescribed Lie algebra of symmetries is that of computing
the differential invariants of its prolonged action, under some regularity
hypotheses; the invariant DE is then described by the vanishing of an arbitrary
function of such invariants. If the prolonged action is not regular, invariant
DEs can be obtained by a careful study of the singular set of the
aforementioned action. An efficient method for obtaining invariant scalar ODEs
in the latter case is that of using Lie determinants \cite{Olver2}, which we
shall employ for our purposes. See \cite{BlumanCole,Fu00,KruMor12,RWinter} for
more approaches to the problem.  In general, DEs do not possess a sufficient
number of independent Lie point symmetries able to characterize them (among the
others we recall KdV equation, Burgers' equation, Kepler's equations). In this
case, one can ask if they can be characterized by a more general algebra of
symmetries. A possible generalization of the concept of Lie remarkable
equations is that suggested in \cite{KruMor12,Rosenhausoneone}: this amounts to
extending the category of symmetries used in the definitions of Lie remarkable
equations to contact symmetries. For instance, the minimal surface equation of
$\mathbb{R}^3$ is completely determined by its contact symmetry algebra
\cite{Rosenhausoneone}. Also, an example of high-order Lie remarkable equation
in this `extended' sense is
\begin{equation}
  \label{eq:3}
  10u_{(3)}^3 u_{(7)} - 70u_{(3)}^2u_{(4)}u_{(6)} -
  49u_{(3)}^2u_{(5)}^2+ 280u_{(3)}u_{(4)}^2u_{(5)} - 175u_{(4)}^4= 0,
\end{equation}
where $u_{(k)}=d^ku/dx^k$, which possesses a $10$-dimensional Lie algebra of
contact symmetries (see \cite{DunSok11,Olver2}).  Sometimes, in order to
completely characterize a given DE, one should also consider non-local
symmetries. This is the situation discussed in \cite{Kra94}, where the idea of
complete symmetry group was proposed and exploited in order to characterize
uniquely Kepler's equation. This idea was subsequently exploited by several
authors in different ways for characterizing many differential equations
\cite{ADLT09,KL02,KL04,ALF01,Leach08,ML09,Nucci96}.

Following the terminology introduced in
\cite{MannoOliveriVitolo05,MOV07,MOV07b,Oli04,Oli06}, we call \emph{Lie
  remarkable} a DE which is completely characterized by its Lie algebra of
point symmetries.  Of course, this concept needs some cares and comments,
which we will give below. Thus, before giving a mathematical definition of it, we have to analyze all the requirements that can
make a DE unique, also by means of simple examples.
It is well known that, locally, any  $r^{\textrm{th}}$ order
differential equation $\cE$ with $n$ independent variables and $m$ dependent
ones can be interpreted as a submanifold of the $r$-jet $J^r(n,m)$ of the
trivial bundle $\mathbb{R}^n\times\mathbb{R}^m\to\mathbb{R}^n$. Let us denote
by $\sym(\cE)$ the Lie algebra of infinitesimal point symmetries of $\cE$. Thus, when saying that
$\cE$ is uniquely determined by $\sym(\cE)$, one should fix, as data of the
problem, the number of independent and dependent variables, the order of the DE
and its dimension as submanifold. For instance, (see also Section
\ref{sec.primitive}), the unique $5^{\textrm{th}}$ order ODE admitting the algebra of
projective vector fields of $\mathbb{R}^2$ is equation of item \ref{eq.conic}
of Theorem \ref{theo.main}, but also $u_{xx}=0$ is the unique $2^{\textrm{nd}}$ order ODE
admitting the algebra of projective vector fields of $\mathbb{R}^2$ as Lie
algebra of point symmetries. Remaining in the realm of projective algebra, the
system $\{y_{xx}=0, u_{xx}=0\}$ is uniquely determined by the $15$-dimensional
projective Lie algebra of $\mathbb{R}^3$, but, as we already said, also the
Monge-Amp\`ere equation $ u_{xx}u_{yy}-u_{xy}^2=0 $ admits the same
$15$-dimensional Lie algebra of vector fields as Lie algebra of point symmetries. Both the
system $\{y_{xx}=0, u_{xx}=0\}$ and $u_{xx}u_{yy}-u_{xy}^2=0$ are, in their own
class, the only DEs admitting the projective algebra of $\mathbb{R}^3$ as Lie
algebra of point symmetries.  As the last consideration, we observe that if an
equation $\cE$ admits a Lie algebra of point symmetries, also an open
submanifold of $\cE$ admits the same Lie algebra of symmetries, so that when
speaking about Lie remarkable equations one should think of them up to
inclusion.  Bringing all the above observations together, below we  formulate a
more precise definition of Lie remarkable equations.

\medskip\noindent
\textbf{\large Notations and conventions}: Throughout the paper, we will use the Einstein summation convention, unless otherwise specified. We will always use the word ``symmetry'' for ``infinitesimal point symmetry''. When we speak about a Lie algebra we always mean a \emph{Lie algebra of vector fields} of finite dimension, unless otherwise specified. Finally, if $\mathfrak{s}$ and $\mathfrak{g}$ are Lie algebras, $\mathfrak{s}\leq\mathfrak{g}$ means that $\mathfrak{s}$ is a Lie subalgebra of $\mathfrak{g}$.

\smallskip

\begin{definition}
  An $l$-dimensional $r^{\textrm{th}}$ order equation $\cE\subset J^r(n,m)$ is called \emph{Lie
    remarkable} if it is the only $l$-di\-men\-sional $r^{\textrm{th}}$ order equation in $J^r(n,m)$,
  up to inclusion and up to point transformations, admitting $\sym(\cE)$ as a Lie symmetry
  subalgebra.
\end{definition}

Below we will shed light on the above definition by means of a simple example.
Equation
\begin{equation*}
\mathcal{E}_1:\,\,u_{xx}=\frac{1}{2}u_x + e^{-2x}u_x^3
\end{equation*}
is not Lie-remarkable. In fact $\sym(\mathcal{E}_1)$ is linearly generated by
\begin{equation}\label{eq.symm.family}
  \partial_u\,, \quad \partial_x+u\partial_u\,, \quad
  u\partial_x+\frac{u^2}{2}\partial_u
\end{equation}
but also the equation
\begin{equation*}
\mathcal{E}_2:\,\,u_{xx}=\frac{1}{2}u_x
\end{equation*}
admits $\sym(\mathcal{E}_1)$ as a Lie subalgebra of its Lie symmetry algebra. Indeed, $\sym(\mathcal{E}_1)\lneq \sym(\mathcal{E}_2)$, as $\sym(\mathcal{E}_2)$ is isomorphic
to the projective Lie algebra of $\mathbb{R}^2$. Thus,
$\mathcal{E}_1$ and $\mathcal{E}_2$ are not equivalent. To conclude,
$\mathcal{E}_1$ is not Lie-remarkable, whereas $\mathcal{E}_2$ is.

\smallskip

Of course, an abstract Lie algebra can be realized, in terms of vector fields, in different non-equivalent ways. For instance, S. Lie \cite{Lie} investigated the possible realizations of the non-commutative Lie algebra of dimension $2$ (i.e., the Lie algebra spanned by two elements $X$ and $Y$ such that $[X,Y] = X$) as Lie algebra of vector
fields on $\mathbb{R}^2$. He showed that, almost every point of $\mathbb{R}^2$ has a neighborhood on which there are coordinates $x,u$ in which
\begin{equation}\label{eq.real}
1.\,\{X,Y\}=\{e^u\partial_u\,,\,-\partial_u\} \,\,\, \text{or} \,\,\, 2.\,\{X,Y\}=\{\partial_u\,,\,\partial_x+u\partial_u\}.
\end{equation}
Of course, realizations \eqref{eq.real} are not equivalent, as the orbits of the first realization are $1$-dimensional whereas the orbits of the second one are $2$-dimensional.

\begin{definition}\label{def.lie.rem}
We say that an $l$-dimensional $r^{\textrm{th}}$ order equation $\cE\subset J^r(n,m)$ is \emph{associated} with a Lie algebra of vector fields $\mathfrak{s}$ if it is the
  only $l$-di\-men\-sional $r^{\textrm{th}}$ order equation in $J^r(n,m)$,
  up to inclusion and up to point transformations, admitting $\mathfrak{s}$
  as a Lie subalgebra of $\sym(\cE)$.
\end{definition}
We would like to stress that, in Definition \ref{def.lie.rem}, choosing a realization of the Lie algebra in terms of vector fields is crucial, otherwise the definition of a DE associated with a Lie algebra would not be well posed. Indeed, non-equivalent realizations of the same (abstract) Lie algebra lead, in general, to different DEs. For instance, let us we consider the non-commutative Lie algebra of dimension $2$ and its realizations $1.$ and $2.$ of \eqref{eq.real}. The most general $2^{\mathrm{nd}}$ order ODE having $1.$ of \eqref{eq.real} as a Lie symmetry subalgebra is
\begin{equation}\label{eq.trivial}
u_{xx}=f(x)u_x+u_x^2
\end{equation}
whereas, if we consider $2.$ of \eqref{eq.real}, we obtain
\begin{equation}\label{eq.nontrivial}
u_{xx}=g\left(\frac{u_x}{u}\right)u.
\end{equation}
Finally we note that equations \eqref{eq.trivial} are point-equivalent, for any $f\in C^\infty(\mathbb{R})$, to $u_{xx}=0$ as they possess an $8$-dimensional Lie algebra of point symmetries (more directly, one can easily check that the Liouville-Cartan invariant vanishes), whereas equations of type \eqref{eq.nontrivial} are not point equivalent each other. In fact, equation \eqref{eq.nontrivial} with $g\left(\frac{u_x}{u}\right)=\left(\frac{u_x}{u}\right)^4$ cannot be linearizable as the only $2^{\mathrm{nd}}$ order ODEs which can have this property are of type $u_{xx}=h(x,u,u_x)$ with $h_{u_xu_xu_xu_x}=0$. Thus, we can say that equation $u_{xx}=0$ is associated with Lie algebra of vector fields $1.$ of \eqref{eq.real}, whereas there are no $2^{\textrm{nd}}$ order ODEs associated with Lie algebra of vector fields $2.$ of \eqref{eq.real}.

\smallskip

A first consequence of Definition \ref{def.lie.rem} is the following obvious proposition.
\begin{proposition}\label{prop.chissa}
  If the equation $\mathcal{E}$ is associated with $\mathfrak{s}\leq
  \sym(\mathcal{E})$, then it is associated with any subalgebra
  $\widetilde{\mathfrak{s}}$ of $\sym(\mathcal{E})$ such that
  $\mathfrak{s}\leq\widetilde{\mathfrak{s}}$.
\end{proposition}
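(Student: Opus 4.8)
The plan is to unwind Definition~\ref{def.lie.rem} and exploit the elementary observation that enlarging the prescribed symmetry algebra can only shrink the class of competing equations, so that the uniqueness property propagates upward from $\mathfrak{s}$ to $\widetilde{\mathfrak{s}}$. First I would record that $\cE$ itself admits $\widetilde{\mathfrak{s}}$ as a Lie subalgebra of $\sym(\cE)$: this is immediate from the hypothesis $\widetilde{\mathfrak{s}}\leq\sym(\cE)$. Hence $\cE$ is a legitimate candidate for being associated with $\widetilde{\mathfrak{s}}$, and it remains only to establish the uniqueness clause of the definition.

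For the uniqueness step I would take an arbitrary $l$-dimensional $r^{\textrm{th}}$ order equation $\cE'\subset J^r(n,m)$ that admits $\widetilde{\mathfrak{s}}$ as a Lie subalgebra of $\sym(\cE')$, i.e.\ $\widetilde{\mathfrak{s}}\leq\sym(\cE')$. Combining this with the assumption $\mathfrak{s}\leq\widetilde{\mathfrak{s}}$ and using the transitivity of the subalgebra relation $\leq$, I obtain $\mathfrak{s}\leq\sym(\cE')$, so that $\cE'$ also admits $\mathfrak{s}$ as a symmetry subalgebra. Since by hypothesis $\cE$ is associated with $\mathfrak{s}$, Definition~\ref{def.lie.rem} forces $\cE'$ to coincide with $\cE$ up to inclusion and up to point transformations. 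This is precisely the uniqueness required for $\cE$ to be associated with $\widetilde{\mathfrak{s}}$.

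Conceptually, the entire argument rests on the set-theoretic remark that the family of $l$-dimensional $r^{\textrm{th}}$ order equations admitting $\widetilde{\mathfrak{s}}$ is contained in the family admitting $\mathfrak{s}$; since the latter has $\cE$ as its only member, up to the stated equivalences, and $\cE$ also belongs to the former, the former reduces to $\cE$ as well. I do not anticipate any genuine obstacle, as the statement is purely formal; the only point deserving care is keeping track of the correct direction of the implication---a \emph{larger} symmetry algebra yields a \emph{smaller} family of invariant equations---so that the inclusion of algebras and the inclusion of the corresponding families of equations are not inadvertently reversed.
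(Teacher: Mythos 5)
Your proof is correct and is exactly the transitivity argument the paper has in mind: the paper states Proposition~\ref{prop.chissa} as an ``obvious'' consequence of Definition~\ref{def.lie.rem} and gives no written proof, and your two steps ($\cE$ itself admits $\widetilde{\mathfrak{s}}\leq\sym(\cE)$, and any $\cE'$ with $\widetilde{\mathfrak{s}}\leq\sym(\cE')$ satisfies $\mathfrak{s}\leq\sym(\cE')$, hence coincides with $\cE$ up to inclusion and point transformations) are precisely the intended formalization.
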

We remark that DEs of different order can be associated to the same Lie
algebra of vector fields. From the above discussions it is clear that a Lie remarkable equation
needs a Lie algebra of point symmetries of suitable dimension: in Section
\ref{sec.first.int} we show that, in the case of scalar ODEs, this leads also
to the existence of first integrals.

\medskip

In the present paper we shall construct, in an algorithmic way, (system of)
ODEs associated with relevant Lie algebras of vector fields on $\mathbb{R}^k$
by using sufficient conditions contained in Section \ref{sec.comments} (more
precisely, Propositions \ref{prop.sufficienza}, \ref{crit} and
\ref{prop.sufficienza.2}). As first step, we obtain scalar Lie
remarkable ODEs by means of the local classification of primitive Lie algebras
of vector fields on $\mathbb{R}^2$ (a list of such Lie algebras of vector fields can be found in
\cite{Olver2}). Note that they include the euclidean, affine, special conformal
and projective Lie algebra of $\mathbb{R}^2$. Then we concentrate on the
computations of Lie remarkable systems of ODEs. Below we give the main
theorems.

\begin{theorem}\label{theo.main}
  Lie remarkable scalar ODEs associated with primitive Lie algebras of vector
  fields on $\mathbb{R}^2$ are listed below (we refer to
  table~\eqref{eq.primitive}) :
  \begin{enumerate}
  \item\label{eq.no.Lie.rem.scalar} There are no Lie remarkable scalar ODEs
    associated with Lie algebras \textbf{I}, \textbf{II} and \textbf{III};
  \item\label{eq.lines.2d} with algebras \textbf{IV} or \textbf{V} it is
    associated the equation of straight lines $u_{xx}=0\,$;
  \item\label{eq.GKS} with Lie algebra \textbf{VI} it is associated the
    equation $u_{xx}=0$ and the equation of the vanishing affine curvature
    $3u_{xx}u_{xxxx}-5u_{xxx}^2=0\,$;
  \item\label{eq.circle.2d} with Lie algebra \textbf{VII} it is associated the
    equation the equation of circles $(1+u_x^2)u_{xxx}-3{u_xu_{xx}^2}=0\,$;
  \item\label{eq.conic} with Lie algebra \textbf{VIII} it is associated the
    equation $u_{xx}=0$ and the equation of conic sections
    $9u_{xxxxx}u_{xx}^2+40u_{xxx}^3-45u_{xx}u_{xxx}u_{xxxx}=0\,$.
  \end{enumerate}
\end{theorem}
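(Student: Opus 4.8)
The plan is to handle the eight primitive algebras of table~\eqref{eq.primitive} one at a time. For a given algebra $\mathfrak{s}$ I would prolong its generators to the jet spaces $J^r(1,1)$, with coordinates $(x,u,u_x,\dots,u_{(r)})$ and $\dim J^r(1,1)=r+2$, and study the orbit geometry of the prolonged action. An $l$-dimensional $r^{\textrm{th}}$ order equation admitting $\mathfrak{s}$ as a symmetry subalgebra is a union of orbits, so a codimension-one invariant equation $\{F=0\}$, for which $\pro^{(r)}X(F)\equiv 0$ on $\{F=0\}$ for every $X\in\mathfrak{s}$, can arise in only two ways: as a regular level set of a genuine differential invariant, or as the closure of a singular orbit on which the prolonged generators drop rank. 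The singular orbits are exactly the zero loci of the Lie determinants of \cite{Olver2}. Checking that each listed equation admits the corresponding algebra is the easy, purely computational half; the real content is \emph{uniqueness}, i.e.\ that it is the only equation of that order and dimension, which amounts to showing that the candidate invariant locus carries no free constant or arbitrary function.

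The organising principle I would use is the order at which the \emph{first} genuine differential invariant of the prolonged action appears. If this order is low, then already at that order the invariant equations form a family (level sets $\{I=\mathrm{const}\}$, or relations $\{I_2=g(I_1)\}$ with $g$ arbitrary), exactly as in the family $u_{xx}=f(x)u_x+u_x^2$ discussed before the theorem, and no single member can be Lie remarkable. If, on the contrary, the first invariant is pushed to high order, then at the intermediate orders the only invariant hypersurfaces are the singular orbits, and these are rigid. Thus for each algebra I would first locate the order of the first invariant and then match it against the orders $2,3,4,5$ appearing in the statement.

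Carrying this out: algebras \textbf{I}, \textbf{II}, \textbf{III} are the three-dimensional isometry algebras of the constant-curvature geometries, whose prolonged action already possesses a second-order invariant (the geodesic curvature); hence every invariant equation sits in a one-parameter or functional family, and item~\ref{eq.no.Lie.rem.scalar} follows. Passing to \textbf{IV} and \textbf{V}, the extra generators destroy the second-order invariant, so that $u_{xx}=0$ becomes the unique second-order invariant equation (its $2$-jets being the distinguished singular orbit in $J^2$), while the next invariant reappears at order $3$ or $4$ and again produces families there; this gives item~\ref{eq.lines.2d}. For the affine algebra \textbf{VI} (dimension $6$) the first invariant is of high order, so at order $2$ one again gets $u_{xx}=0$, and at order $4$, where $\dim J^4=6$, the $6\times 6$ Lie determinant vanishes precisely on $3u_{xx}u_{xxxx}-5u_{xxx}^2=0$, giving item~\ref{eq.GKS}. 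For the conformal algebra \textbf{VII} the circles (together with lines) form the unique codimension-one orbit in $J^3$, cut out by $(1+u_x^2)u_{xxx}-3u_xu_{xx}^2=0$, while $u_{xx}=0$ alone is \emph{not} invariant because inversions carry lines to circles; this is item~\ref{eq.circle.2d}. Finally, for the projective algebra \textbf{VIII} (dimension $8$) one gets $u_{xx}=0$ at order $2$, and at order $5$ the $5$-jets of conics form the unique codimension-one orbit in the seven-dimensional $J^5$, yielding the stated conic equation and item~\ref{eq.conic}. Wherever two algebras share $u_{xx}=0$, Proposition~\ref{prop.chissa} accounts for it via the inclusions between these algebras.

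The main obstacle I anticipate is precisely the uniqueness/rigidity step, because it requires controlling the \emph{entire} tower of invariants of each prolonged action rather than merely exhibiting one invariant equation: one must prove that the first genuine invariant occurs exactly where claimed and that no second singular orbit of the wrong codimension intrudes. This is where I would lean on the sufficient conditions of Propositions~\ref{prop.sufficienza}, \ref{crit} and \ref{prop.sufficienza.2}, whose Lie-determinant criterion converts these orbit-dimension statements into an effective, finite computation for each of the eight algebras.
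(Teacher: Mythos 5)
Your positive identifications follow the paper's route almost exactly: candidate equations arise as the singular loci of the matrices $\mathcal{M}_{\mathfrak{s}^{(r)}}$ of prolonged generators (the Lie determinants), remarkability is certified by Proposition~\ref{prop.sufficienza}, and the recurrences of $u_{xx}=0$ are handled by Proposition~\ref{prop.chissa} via the inclusions among the algebras --- all as in the paper. The genuine gap is in your nonexistence/uniqueness half. You infer that whenever a low-order invariant $I$ exists, ``the invariant equations form a family (level sets $\{I=\mathrm{const}\}$) \dots and no single member can be Lie remarkable.'' That inference is false, and the very example you cite in support refutes it: the family \eqref{eq.trivial} consists of equations that are \emph{all point-equivalent} to $u_{xx}=0$, which is exactly why the paper concludes that $u_{xx}=0$ \emph{is} associated with realization $1.$ of \eqref{eq.real}. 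Proposition~\ref{crit} codifies this mechanism: when all the level sets $F=k$ are mutually point-equivalent, the equation $F=0$ is associated with $\mathfrak{s}$ despite sitting in a one-parameter family. So the existence of the family decides nothing by itself; for each algebra and each invariant one must determine whether the level sets are mutually equivalent (in which case an associated equation \emph{does} exist) or not (in which case none does).

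This case-by-case equivalence analysis is precisely where the substance of the paper's proof lies, and it is absent from your plan: for algebra \textbf{I} (which, incidentally, is an isometry algebra only for $\alpha=0$; its invariant is \eqref{eq.diff.inv.1}, a curvature twisted by $e^{-\alpha\arctan(u_x)}$) the paper shows $F=0$ defines a projective connection ($u_{xx}$ cubic in $u_x$), a class closed under point transformations, while $F=k$ with $k\neq 0$ does not; for \textbf{IV}, the equation $\mathcal{I}=0$ has a $6$-dimensional symmetry algebra while $\mathcal{I}=k$, $k\neq 0$, has a $5$-dimensional one; for \textbf{V}, prolonged point transformations act affinely on the derivatives of order $\geq 2$, so $\mathcal{J}=0$ cannot be mapped to $\mathcal{J}=k$; and analogous checks, together with the failure of the necessary conditions of Remark~\ref{rem.nec.cond.suff.2} for Proposition~\ref{prop.sufficienza.2}, are required for \textbf{II}, \textbf{III} and \textbf{VI}--\textbf{VIII}. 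Without these concrete inequivalence arguments, item~\ref{eq.no.Lie.rem.scalar} is unproved and the completeness of items~\ref{eq.lines.2d}--\ref{eq.conic} at the remaining orders is unjustified; your closing appeal to Propositions~\ref{prop.sufficienza}, \ref{crit} and \ref{prop.sufficienza.2} does not repair this, because Proposition~\ref{crit} runs in the direction opposite to the one your argument needs.
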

Equation of item \ref{eq.GKS} is also known as \emph{generalized
  Kummer-Schwartz equation} (see \cite{Leach08} for a discussion of this
topic). Equation of item \ref{eq.circle.2d} can be realized as the vanishing of
total derivative of the euclidean curvature $u_{xx}(1+u_x^2)^{-\frac{3}{2}}$ of
the curve $u=u(x)$. As regard to equation of item \ref{eq.conic}, it was
somehow expectable to obtain it: indeed \textbf{VIII} is the projective Lie
algebra of $\mathbb{R}^2$ and a projective transformation sends a conic curve
into a conic curve.

\smallskip

For what concerns systems of ODEs, by means of the methods
described above, we found Lie remarkable systems of ODEs in $2$ dependent
variables associated with euclidean, affine, conformal and projective Lie
algebra of $\mathbb{R}^3$. We summarize our results in the theorem below.

\begin{theorem}\label{thm:systems}
  Lie remarkable systems of ODEs in $2$ dependent variables associated with
  isometry, affine, special conformal and projective Lie algebra of the
  euclidean space $\mathbb{R}^{3}$ are, respectively, listed below
  \begin{enumerate}
  \item\label{eq.lines.nd} with the isometry Lie algebra it is associated the
    system of straight lines $\{u^k_{xx}=0,\,\,k=1,2\}\,$;
  \item\label{eq.affine} with the affine Lie algebra it is associated the system
    $\{u^k_{xx}=0,\,\,k=1,2\}\,$ and the system appearing in section
    \ref{sec.aff.as.Lie.sym};
  \item\label{eq.circle.nd} with the conformal Lie algebra it is associated the
    system of circles $
    \{(1+\sum_{j}(u^j_x)^2)u^k_{xxx}=3u^k_{xx}{\sum_{j}u^j_xu^j_{xx}}\,,\,\,
    k=1,2\,\}$;
  \item with the projective Lie algebra it is associated the system
    $\{u^k_{xx}=0,\,\,k=1,2\}\,$ and the system appearing in section
    \ref{sec.proj.as.Lie.sym}.
  \end{enumerate}
\end{theorem}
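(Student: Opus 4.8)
The plan is to treat the four algebras by the same scheme already used for the scalar case of Theorem~\ref{theo.main}, now in $J^r(1,2)$ with one independent and two dependent variables. For each algebra I would first fix an explicit realization as vector fields on $\mathbb{R}^3=\{(x,u^1,u^2)\}$: the isometry algebra as the span of the three translations $\partial_x,\partial_{u^1},\partial_{u^2}$ together with the three infinitesimal rotations; the affine algebra by adjoining to the translations the nine generators of $\mathfrak{gl}(3)$; the special conformal algebra by adjoining to the isometries the dilation and the three special conformal fields; and the projective algebra as $\mathfrak{sl}(4)$ acting on $\mathbb{R}^3$ by projective transformations. I would then prolong these fields to $J^r(1,2)$ up to the order at which the first nontrivial invariant system appears: order two for the line systems and order three for the circle system, in parallel with items~\ref{eq.lines.2d} and~\ref{eq.circle.2d} of Theorem~\ref{theo.main}.

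For the line systems $\{u^k_{xx}=0\}$ and the circle system, invariance is the easy, geometric half of the argument: rigid motions, affine maps and projective maps all send straight lines to straight lines, while by Liouville's theorem the conformal transformations of $\mathbb{R}^3$ are M\"obius maps and hence carry circles to circles. Concretely I would verify this by checking that each prolonged generator is tangent to the candidate submanifold, i.e.\ that it annihilates the defining functions modulo the system. The circle system is most naturally produced, as in item~\ref{eq.circle.2d}, as the vanishing of the total derivative of the euclidean curvature of the curve $x\mapsto(u^1(x),u^2(x))$, which accounts for its third-order form. Moreover, the appearance of $\{u^k_{xx}=0\}$ in the affine and projective cases is then immediate from Proposition~\ref{prop.chissa}: the isometry algebra is a subalgebra of both, and since lines are preserved, association with the isometry algebra propagates upward.

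The substance of the theorem is uniqueness, for which I would invoke the sufficient conditions of Propositions~\ref{prop.sufficienza},~\ref{crit} and~\ref{prop.sufficienza.2}. At the relevant order I would compute the dimension of the generic orbit of the prolonged action and show that the candidate system coincides, up to inclusion, with the locus where this dimension drops; being an invariant locus, it is cut out by the vanishing of the appropriate maximal minors (a Lie-determinant computation exactly as in the scalar case), and one checks that these minors determine it uniquely. The delicate point is to establish that the resulting system is the \emph{only} $l$-dimensional $r$th order one admitting the algebra, rather than merely \emph{an} invariant one: this forces a careful control of the singular strata of the prolonged action and the exclusion of spurious lower-dimensional orbits. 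The genuinely new content for the affine and projective algebras is the second associated system, displayed in Sections~\ref{sec.aff.as.Lie.sym} and~\ref{sec.proj.as.Lie.sym}, which lives at higher jet order; producing it requires prolonging further, locating the next singular stratum of the action and solving the corresponding invariance conditions explicitly, and this is the main computational obstacle of the proof.
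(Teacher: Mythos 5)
Your proposal is correct in outline and, for the affine, conformal and projective algebras, is essentially the paper's own proof: there the authors apply exactly your scheme, prolonging the generators, enforcing the rank condition of Proposition \ref{prop.sufficienza} through the vanishing of minors of the matrix $\mathcal{M}_{\mathfrak{s}^{(r)}}$ (by computer algebra), locating the singular stratum at $r=5$ (affine), $r=3$ (conformal) and $r=6$ (projective), which yields the circle system and the two large systems of Sections \ref{sec.aff.as.Lie.sym} and \ref{sec.proj.as.Lie.sym}; the repeated occurrence of $\{u^k_{xx}=0\}$ follows from Proposition \ref{prop.chissa} exactly as you say. The one place where you genuinely diverge is item \ref{eq.lines.nd}: the paper does \emph{not} treat the isometry algebra by the rank/Lie-determinant method, but solves the determining equations directly. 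It first rules out invariant first-order systems (the rotations force $(c^k)^2+1=0$, a contradiction), then shows invariance under translations gives $u^k_{xx}=F^k(u^1_x,\dots,u^m_x)$, integrates the $k=i$ rotation conditions to $F^k=c^k(1+\sum_j(u^j_x)^2)^{3/2}$, and finally observes that for $m\geq 2$ the mixed rotation conditions kill all the constants $c^k$. This direct route buys two things yours does not: the result for an \emph{arbitrary} number of dependent variables (as the paper advertises after Theorem \ref{thm:systems}), and a transparent explanation of why the system case succeeds where the scalar case fails.

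On that last point, your phrase ``in parallel with items \ref{eq.lines.2d} and \ref{eq.circle.2d}'' hides a real issue: the scalar analogue of the isometry algebra is algebra \textbf{I} (with $\alpha=0$) of table \eqref{eq.primitive}, for which Theorem \ref{theo.main} yields \emph{no} Lie remarkable equation, because the constant-curvature family $F=k$ survives and its members are pairwise inequivalent; items \ref{eq.lines.2d} and \ref{eq.circle.2d} rest on the strictly larger algebras \textbf{IV}/\textbf{V} and \textbf{VII}. So if you insist on the rank method for the isometries of $\mathbb{R}^3$, the nontrivial step is to verify that the prolonged $6$-dimensional algebra has full rank $6$ on $J^2(1,2)$ off $\{u^1_{xx}=u^2_{xx}=0\}$ up to an exceptional set $\mathcal{F}$ of dimension $<5$. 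This does hold for $m=2$, but it is exactly where a constant-curvature-type invariant family could have reappeared, and it cannot be dismissed as the ``easy, geometric half'': invariance of lines under rigid motions is trivial, whereas the content of item \ref{eq.lines.nd} is precisely that, unlike in the plane, nothing else is invariant. You should also, as the paper does, explicitly exclude first-order invariant systems, since the theorem's list is meant to be exhaustive over orders and not only at the order where the candidate appears.
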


We see that the equation/system of lines in the euclidean space appears many
times in the Theorems above in view of Proposition \ref{prop.chissa}. We
underline that it is known that a system of ODEs possessing a Lie symmetry
algebra of maximal dimension (i.e. $m^2+4m+3$) is point-equivalent to the
system of lines \cite{Fel}. In this respect, the result of item
\ref{eq.lines.nd} of Theorem \ref{thm:systems} is somehow unexpected: note
that, in the scalar case, a $2^{\textrm{nd}}$ order ODE which admits the $3$-dimensional
Lie algebra of infinitesimal isometries of the euclidean space as Lie symmetry
subalgebra is not necessarily the equation $u_{xx}=0$ (see the discussion
contained in Section \ref{sec.iso.as.Lie.symm}).
Anyway in Section \ref{sec.iso.as.Lie.symm} we prove the result of item \ref{eq.lines.nd} for an arbitrary number of dependent variables. We observe that the result of item \ref{eq.circle.nd} holds also
for a number of dependent variables less or equal than four. The higher-order case appearing
  in item \ref{eq.affine} is discussed more in detail in Section \ref{sec.aff.as.Lie.sym}.

\medskip

All computations are performed through the use of the
computer algebra package ReLie \cite{Relie}, a REDUCE program developed by one
of us (F.O.).

\section{Preliminaries}
\label{sec.preliminary.definitions}

In the whole paper, all manifolds and maps are supposed to be $C^\infty$.  Here
we recall some basic facts regarding jet spaces (for more details, see
\cite{Many99,Olver1}).  In what follows, $\lambda$ and $\mu$ run from $1$ to
$n$ whereas $i$ and $j$ run from $1$ to $m$. By $J^r(n,m)$ we denote the
$r^{\textrm{th}}$ order jet space of the trivial projection
$\mathbb{R}^n\times\mathbb{R}^m\to\mathbb{R}^n$. Note that
$J^0(n,m)=\mathbb{R}^n\times\mathbb{R}^m$. A system of coordinates
$(x^\lambda,u^i)$ on $\mathbb{R}^n\times\mathbb{R}^m$ induces a system of
coordinates $(x^\lambda,u^i_{\sigma})$ on $J^r(n,m)$, where
$\sigma=(\sigma_1,\sigma_2,\dots,\sigma_n)\in\mathbb{N}_0^n$ such that $|\sigma|:=\sum_i\sigma_i\leq r$, in
the following way. For each local map
$s\colon I\subset\mathbb{R}^n\to \mathbb{R}^m$, we define its jet prolongation
$j_rs\colon I\to J^r(n,m)$ in such a way that
\begin{displaymath}
  u^i_{\sigma} \circ j_rs = \frac{\partial^{|\sigma|}(u^i\circ s)}
  {(\partial x^1)^{\sigma_1}\cdots(\partial x^n)^{\sigma_n}}\,,\,\,|\sigma|\leq r.
\end{displaymath}

In the case $n=1$, i.e. one independent variable, we also denote $x^1$ by $x$ and $u^i_\sigma=u^i_{(\sigma_1)}$ both by $u^i_{\sigma_1}$ and
$u^i_{\underset{\sigma_1-times}{\underbrace{xx\dots x}}}$.

On $J^r(n,m)$ there is the (higher) contact distribution which is generated by
the vector fields
\begin{equation*}
  D_{\lambda }\byd\frac{\partial }{ \partial x^{\lambda }}+
  u_{\sigma,\lambda }^{j}
  \frac{\partial }{\partial u_{\sigma }^{j}}
  \quad\text{and}\quad \pd{}{u_{\tau}^{j}},
\end{equation*}
where $0\leq\abs{\sigma}\leq r-1$, $\abs{\tau}=r$ and $\sigma,\lambda$ denotes
the multi-index $(\sigma_1,\ldots,\sigma_\lambda +1,\ldots,\sigma_n)$. We note
that the contact distribution is spanned by tangent vectors to all submanifolds
of $J^r(n,m)$ of the type $j_rs(I)$; conversely, an integral $n$-dimensional
manifold of the contact distribution which projects surjectively on $I$ under
the canonical map $(x^\lambda,u^i_\sigma)\mapsto (x^\lambda)$ is locally of the
form $j_rs(I)$.

Any vector field $X$ on $J^0(n,m)$ can be lifted to a vector field $X^{(r)}$ on
$J^r(n,m)$ by lifting its local flow: such vector field preserves the contact
distribution.  In coordinates, if $X=X^\lambda\pd{}{x^\lambda}+X^i\pd{}{u^i}$
is a vector field on $J^0(n,m)$, then its $r$-lift $X^{(r)}$ has the following
coordinate expression
\begin{equation}
  \label{eq:1}
  X^{(r)} = X^\lambda\pd{}{x^\lambda} + X^i_{\sigma}\pd{}{u^i_{\sigma}},
\end{equation}
whose components are iteratively defined by $X^j_{\tau,\lambda} =
D_{\lambda}(X^j_{\tau})-u^j_{\tau,\mu}D_{\lambda}(X^{\mu})$ with $|\tau|<r$. In the case $n=1$, i.e. one independent variable, we also denote $X^i_\sigma=X^i_{(\sigma_1)}$ by $X^i_{\sigma_1}$.

An \emph{$r^{\textrm{th}}$ order differential equation (DE) $\mathcal{E}$ with $n$
  independent variables and $m$ unknown functions (or dependent variables)} is
a submanifold of $J^r(n,m)$.

A solution is an $n$-dimensional submanifold of $J^0(n,m)$ which projects
surjectively on $\mathbb{R}^n$ and such that its $r$-prolongation is contained
in $\mathcal{E}$.  An \emph{infinitesimal point symmetry} of $\cE$ is a vector
field $X$ on $J^0(n,m)$ such that its $r$-prolongation $X^{(r)}$ is tangent to
$\cE$: they transform solutions into solutions. We denote by $\sym(\cE)$ the
Lie algebra of infinitesimal point symmetries of the equation $\cE$.

Let $\cE$ be locally described by $\{F^i=0\}$, $i=1\dots k$ with $k<\dim
J^r(n,m)$. Then finding point symmetries amounts to solve the system
\begin{equation*}
  X^{(r)}\left(F^i\right) =0 \quad\text{whenever}\quad F^i=0.
\end{equation*}
The problem of determining the Lie algebra $\sym(\cE)$ is said to be the
\emph{direct Lie problem}. Conversely, given a Lie subalgebra $\mathfrak{s}$ of
the Lie algebra of the vector fields on $J^0(n,m)$, we consider the
\emph{inverse Lie problem}, \emph{i.e.}, the problem of characterizing
equations $\cE\subset J^r(E,n)$ such that $\mathfrak{s}\subset \sym(\cE)$.

\smallskip In the present paper we mainly deal with (system of)
ordinary differential equations (ODEs), and we assume that they always can be
put in normal forms. Thus, by definition, \emph{an $r^{\textrm{th}}$ order ODE is the image of a section
  of the bundle $J^r(1,m)\to J^{r-1}(1,m)$}. Let $\mathfrak{s}$ be a Lie algebra of vector fields on $J^0(1,m)$. Let $\{X_a\}_{1\leq a \leq k}$ be a basis of $\mathfrak{s}$.
We denote by $\mathcal{M}_{\mathfrak{s}^{(r)}}$ the $k\times (1+mr+m)$  matrix of the components, w.r.t. the basis $\{\partial_{x^1},\partial_{u^i_{\sigma_1}}\}_{\substack{1\leq i\leq m\\ 0\leq\sigma_1\leq r}}$, of the prolongations $X_k^{(r)}$ to $J^r(1,m)$ of each
  $X_k$. Namely, if, according with~\eqref{eq:1}, $X_k^{(r)} =
  X_k{}^1\pd{}{x^1} + X_k{}^i_{\sigma}\pd{}{u^i_\sigma}$,
 then matrix $\mathcal{M}_{\mathfrak{s}^{(r)}}$ is
\begin{equation}\label{eq.matrice.prolungamenti}
\mathcal{M}_{\mathfrak{s}^{(r)}}=
\left(
\begin{array}{ccccccccccc}
X_1{}^1 & X_1{}^1_0 & \cdots\cdots & X_1{}^m_0 & X_1{}^1_1 & \cdots\cdots & X_1{}^m_1 & \cdots\cdots & X_1{}^1_r & \cdots\cdots & X_1{}^m_r
\\
\vdots  & \vdots & \vdots & \vdots & \vdots & \vdots & \vdots & \vdots & \vdots &  \vdots & \vdots
\\
\vdots  & \vdots & \vdots & \vdots & \vdots & \vdots & \vdots & \vdots & \vdots &  \vdots & \vdots
\\
X_k{}^1 & X_k{}^1_0 & \cdots\cdots & X_k{}^m_0 & X_k{}^1_1 & \cdots\cdots & X_k{}^m_1 & \cdots\cdots & X_k{}^1_r & \cdots\cdots & X_k{}^m_r
\end{array}
\right)
\end{equation}

We omit the dependency of the above matrix on the basis of $\mathfrak{s}$ as
the computations we have performed and which made use of matrix
\eqref{eq.matrice.prolungamenti} are independent of the chosen basis. In fact
we shall mainly deal with the rank of \eqref{eq.matrice.prolungamenti}.

\section{Sufficient conditions for Lie remarkability and relationship with
  first integrals}\label{sec.comments}

\subsection{Sufficient conditions we shall use for constructing Lie remarkable
  ODEs}

The definition of Lie remarkable equations leads naturally to some sufficient
conditions for determining them. To start with, we observe that with any Lie
algebra $\mathfrak{s}$ of vector fields on a manifold $M$ it is associated an
involutive distribution $\mathcal{D}^\mathfrak{s}$ (generally, of non-constant
rank) defined by
\begin{equation}\label{eq.distr.sing}
p\in M\mapsto\mathcal{D}^\mathfrak{s}_p:=\{X_p\,\,|\,\,
X\in\mathfrak{s}\}\subset T_pM.
\end{equation}
In view of Frobenius theorem, involutive distributions on a smooth manifold $M$ are integrable, i.e. through each point of $M$ there is a unique maximal leaf, provided they are of constant rank.
For involutive distributions of non-constant rank, there exist sufficient conditions which assure their integrability (see for instance, Theorem 3.25 of \cite{KMS}).
For distributions coming from Lie algebra actions, i.e. of type \eqref{eq.distr.sing}, it holds the following theorem.
\begin{theorem}[\cite{AlekMich}]\label{th.Alek.Michor}
Let $\mathfrak{s}$ be a finite-dimensional Lie algebra. Then distribution $\mathcal{D}^\mathfrak{s}$ defined by \eqref{eq.distr.sing} is integrable.
\end{theorem}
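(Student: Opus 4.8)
The plan is to reduce the statement to the integrability theory for singular (i.e. non-constant rank) distributions, of which the Stefan--Sussmann theorem (see also Theorem 3.25 of \cite{KMS}) is the prototype. Since $\mathcal{D}^{\mathfrak{s}}$ is exactly the kind of distribution whose rank may jump, Frobenius' theorem does not apply directly, and this is precisely where the main difficulty lies: one cannot argue by constant-rank integral charts. The criterion I would invoke states that a smooth distribution generated by a family $\mathcal{F}$ of vector fields is integrable as soon as it is \emph{invariant} under the local flows of the members of $\mathcal{F}$, meaning that for every $X\in\mathcal{F}$ the pushforward $(\phi^X_t)_*$ carries $\mathcal{D}^{\mathfrak{s}}_p$ into $\mathcal{D}^{\mathfrak{s}}_{\phi^X_t(p)}$, where $\phi^X_t$ denotes the flow of $X$. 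The whole problem is thus reduced to checking this invariance for the generating family $\mathcal{F}=\mathfrak{s}$.

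The heart of the argument is this invariance computation, and it is here that the finite-dimensionality of $\mathfrak{s}$ is essential. Fix $X,Y\in\mathfrak{s}$ and consider the time-dependent vector field $Y_t:=(\phi^X_t)^*Y=(\phi^X_{-t})_*Y$. Using the standard identity $\frac{d}{dt}(\phi^X_t)^*Y=(\phi^X_t)^*\mathcal{L}_XY$ together with $(\phi^X_t)^*X=X$, one finds that $Y_t$ solves the linear ordinary differential equation
\begin{equation*}
\frac{d}{dt}Y_t=[X,Y_t]=\operatorname{ad}_X Y_t,\qquad Y_0=Y.
\end{equation*}
The key observation is that $\operatorname{ad}_X=[X,\cdot\,]$ maps $\mathfrak{s}$ into itself, because $\mathfrak{s}$ is closed under the Lie bracket; hence the displayed equation is a \emph{linear} ODE on the \emph{finite-dimensional} vector space $\mathfrak{s}$, and its unique solution $Y_t=e^{t\operatorname{ad}_X}Y$ stays in $\mathfrak{s}$ for every $t$. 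This is exactly the step that would break down without finite-dimensionality, since the series $\sum_{n}\frac{t^n}{n!}(\operatorname{ad}_X)^nY$ need not converge inside an infinite-dimensional $\mathfrak{s}$.

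It then follows that for any $v=Y_p\in\mathcal{D}^{\mathfrak{s}}_p$ one has $(\phi^X_t)_*v=(e^{-t\operatorname{ad}_X}Y)_{\phi^X_t(p)}$, which is the value at $\phi^X_t(p)$ of an element of $\mathfrak{s}$, hence lies in $\mathcal{D}^{\mathfrak{s}}_{\phi^X_t(p)}$. This proves that $\mathcal{D}^{\mathfrak{s}}$ is invariant under the flows of all its generators, and applying $-t$ shows the map is in fact a bijection onto $\mathcal{D}^{\mathfrak{s}}_{\phi^X_t(p)}$. The invariance criterion quoted above then yields at once that $\mathcal{D}^{\mathfrak{s}}$ is integrable, which is the assertion of the theorem. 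In summary, the two delicate points I would be careful about are the passage from the inapplicable Frobenius theorem to the singular integrability theorem, and the verification of flow-invariance; the latter is settled cleanly by the finite-dimensional linear ODE argument, which is where the hypothesis on $\mathfrak{s}$ is genuinely used.
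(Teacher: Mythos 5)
Your argument is correct in substance, but note first that there is no in-paper proof to match: the paper states this theorem as a quotation from \cite{AlekMich}, having earlier pointed to Theorem 3.25 of \cite{KMS} for general integrability criteria for singular involutive distributions. What you have written is essentially the standard proof from that literature: reduce to the Stefan--Sussmann criterion (integrability of a smooth singular distribution follows from invariance under the local flows of a spanning family) and verify invariance via the identity $(\phi^X_t)_*Y=e^{-t\operatorname{ad}_X}Y$, which uses exactly the two hypotheses on $\mathfrak{s}$ --- closure under brackets and finite dimension. Two points deserve care. First, a cosmetic one: elements of $\mathfrak{s}$ need not be complete, so $\phi^X_t$ is only a local flow and the invariance statement should be read for $t$ in its domain; this is all the criterion requires. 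Second, a genuine (though easily repaired) soft spot: your sentence ``the displayed equation is a linear ODE on the finite-dimensional vector space $\mathfrak{s}$, and its unique solution stays in $\mathfrak{s}$'' is slightly circular as written. A priori the curve $Y_t=(\phi^X_t)^*Y$ evolves in the infinite-dimensional space $\mathfrak{X}(M)$ of all vector fields, where uniqueness of solutions of $\dot Y_t=[X,Y_t]$ is not automatic (the right-hand side involves derivatives of $Y_t$, so one cannot even argue pointwise), and you may not simply identify $Y_t$ with the solution of the same ODE run inside $\mathfrak{s}$.

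The repair is one line and stays entirely within your scheme: define $Z_t:=e^{t\operatorname{ad}_X}Y\in\mathfrak{s}$, which is well defined precisely because $\operatorname{ad}_X$ preserves the finite-dimensional space $\mathfrak{s}$, and compute $\frac{d}{dt}\left((\phi^X_{-t})^*Z_t\right)=(\phi^X_{-t})^*\left(\dot Z_t-[X,Z_t]\right)=0$, so that $(\phi^X_{-t})^*Z_t\equiv Z_0=Y$ and hence $Y_t=(\phi^X_t)^*Y=Z_t\in\mathfrak{s}$ for all admissible $t$. This avoids any appeal to uniqueness in $\mathfrak{X}(M)$ and makes your final step --- $(\phi^X_t)_*(Y_p)=(e^{-t\operatorname{ad}_X}Y)_{\phi^X_t(p)}\in\mathcal{D}^{\mathfrak{s}}_{\phi^X_t(p)}$ --- fully rigorous. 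With that insertion your proof is complete, and it correctly isolates where finite-dimensionality is genuinely used, exactly as in the reference the paper cites.
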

We note that, with any Lie symmetry algebra $\sym(\mathcal{E})$ of a
differential equation $\mathcal{E}\subset J^r(n,m)$ of order $r$, one can
associate the distribution $\mathcal{D}^{\sym(\mathcal{E})}$ on the $r^{\textrm{th}}$ order
jet space. The following inequality holds:
\begin{equation*}
  \quad \dim\sym(\cE) \geq \dim\mathcal{D}^{\sym(\mathcal{E})}_\theta, \,\, \forall\,\theta\in
  J^r(n,m),
\end{equation*}
where $\dim\sym(\cE)$ is the dimension, as real vector space, of the Lie
algebra $\sym(\cE)$ of the infinitesimal point symmetries of $\cE$.  An integral
submanifold of $\mathcal{D}^{\sym(\mathcal{E})}$ is, in general, an equation in
$J^r(n,m)$. By construction, such equation admits all elements in $\sym(\cE)$
as infinitesimal point symmetries. This leads to the following proposition,
which we shall use in Section \ref{sec.proof} for computing Lie remarkable
equations starting from distinguished Lie algebras of vector fields on
$J^0(n,m)$.

\begin{proposition}\label{prop.sufficienza}
  Let $\cE\subset J^r(n,m)$. If $\dim \mathcal{D}^{\sym(\mathcal{E})}_\theta >
  \dim\cE$ $\forall\,\theta\in J^r(n,m)\setminus\{\mathcal{E}\cup \mathcal{F}\}$, where $\mathcal{F}$ is either an empty set or a finite union of submanifolds of dimension less than $\dim\mathcal{E}$, then $\cE$ is Lie remarkable.
\end{proposition}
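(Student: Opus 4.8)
The plan is to establish uniqueness by confronting an arbitrary competitor with the symmetry data of $\cE$. So let $\cE'\subset J^r(n,m)$ be any $l$-dimensional $r^{\textrm{th}}$ order equation, with $l=\dim\cE$, admitting $\sym(\cE)$ as a Lie symmetry subalgebra, that is $\sym(\cE)\leq\sym(\cE')$. I would show that $\cE'$ coincides with $\cE$ up to inclusion; this already yields the equivalence up to inclusion and point transformations required by the definition of Lie remarkability, so the point transformations play no active role in the argument.

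The central step is a tangency observation combined with the strict dimensional hypothesis. Since every $X\in\sym(\cE)$ also lies in $\sym(\cE')$, its prolongation $X^{(r)}$ is tangent to $\cE'$ at each point of $\cE'$; hence the distribution generated by these prolongations satisfies $\mathcal{D}^{\sym(\cE)}_\theta\subseteq T_\theta\cE'$ for every $\theta\in\cE'$, and in particular $\dim\mathcal{D}^{\sym(\cE)}_\theta\leq\dim T_\theta\cE'=l$ there. On the other hand, the hypothesis asserts $\dim\mathcal{D}^{\sym(\cE)}_\theta>l$ at every $\theta\in J^r(n,m)\setminus(\cE\cup\mathcal{F})$. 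These two facts are incompatible, so no point of $\cE'$ can lie outside $\cE\cup\mathcal{F}$; that is, $\cE'\subseteq\cE\cup\mathcal{F}$.

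From here I would finish with a dimension count on the exceptional set. Because $\mathcal{F}$ is a finite union of submanifolds each of dimension strictly less than $l$, its trace on the $l$-dimensional submanifold $\cE'$ is nowhere dense, so $W:=\cE'\setminus\mathcal{F}$ is a non-empty, dense, open subset of $\cE'$, and by the previous step $W\subseteq\cE$. Since $W$ is $l$-dimensional and contained in the $l$-dimensional submanifold $\cE$, the inclusion $W\hookrightarrow\cE$ is an immersion between manifolds of equal dimension, hence a local diffeomorphism, so $W$ is open in $\cE$. Thus $\cE$ and $\cE'$ share the common open set $W$, which is exactly what it means for them to coincide up to inclusion, and $\cE$ is Lie remarkable.

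I expect the only real obstacle to be the careful treatment of the exceptional set $\mathcal{F}$: to conclude that $W=\cE'\setminus\mathcal{F}$ is genuinely open and dense in $\cE'$ one needs $\mathcal{F}$ to be closed (or to pass to its closure) and to use that each of its pieces, having dimension $<l$, has empty interior in $\cE'$. Everything else—the tangency inclusion and the strict inequality $\dim\mathcal{D}^{\sym(\cE)}_\theta>l$—combines into a clean contradiction, so the substance of the proof is this point-set bookkeeping together with the equidimensional-immersion step identifying $W$ as an open subset of both equations.
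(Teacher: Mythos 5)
Your proof is correct and rests on exactly the paper's key mechanism: tangency of the prolonged symmetry algebra to any equidimensional competitor $\cE'$, combined with the strict inequality $\dim \mathcal{D}^{\sym(\cE)}_\theta > \dim\cE$ off $\cE\cup\mathcal{F}$, forces $\cE'\subseteq \cE\cup\mathcal{F}$. Where you genuinely diverge is the endgame. The paper argues by contradiction in two cases: it picks a point $\theta\in\cE'\setminus\cE$, kills the case $\theta\notin\mathcal{F}$ exactly as you do, and in the case $\theta\in\mathcal{F}$ asserts that some point of $\cE'\setminus\mathcal{F}$ also avoids $\cE$ --- which does not literally follow: it could happen that every point of $\cE'\setminus\mathcal{F}$ lies in $\cE$, i.e.\ $\cE'\subseteq\cE\cup\mathcal{F}$ with $\cE'\neq\cE$ as point sets, and then no contradiction arises; the correct conclusion in that situation is coincidence \emph{up to inclusion}, which is precisely what the definition of Lie remarkability permits and precisely what your density/equidimensional-immersion argument delivers. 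So your direct route (inclusion $\cE'\subseteq\cE\cup\mathcal{F}$, then $W=\cE'\setminus\mathcal{F}$ dense in $\cE'$ and open in both equations) is tighter than the paper's at the one point where the paper glosses. The caveat you flag about $\mathcal{F}$ not being closed is real but harmless: each piece of $\mathcal{F}$ is an embedded submanifold, hence locally closed, its trace on the $l$-dimensional $\cE'$ has empty interior by invariance of dimension, and a locally closed set with empty interior is nowhere dense, so the finite union $\mathcal{F}\cap\cE'$ is nowhere dense in $\cE'$ and $W$ contains a nonempty open subset of $\cE'$, which is all your final step needs.
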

\begin{remark}
The hypotheses of Proposition \ref{prop.sufficienza} are motivated by Theorem \ref{th.Alek.Michor}, which we shed light by an example. Let us consider the Lie algebra of vector fields $\mathfrak{s}$ on $\mathbb{R}^3=(x,y,z)$ linearly generated by
$$
x\partial_x+y\partial_y\,, \quad -y\partial_x+x\partial_y\,, \quad z\partial_z
$$
In this case, distribution $\mathcal{D}^{\mathfrak{s}}$ is of non-constant rank: indeed it has rank $3$ on $\mathbb{R}^3$ minus the algebraic variety described by $z(x^2+y^2)=0$. The hyperplane $S$ described by $z=0$ is the unique $2$-dimensional submanifold such that $\mathcal{D}^{\mathfrak{s}}_p\subseteq T_pS$, $\forall\,p\in S$. Note that, outside $S$, the rank of $\mathcal{D}^{\mathfrak{s}}$ is equal to $3$ except on the two lines $\{x=0,y=0,z>0\}$ and $\{x=0,y=0,z<0\}$, where the rank is equal to $1$.
\end{remark}
\begin{proof}[Proof of Proposition \ref{prop.sufficienza}]
Let $\widetilde{\mathcal E}\subset J^r(n,m)$ be a DE (of the same dimension as $\mathcal{E}$) such that $\sym({\mathcal E})\leq\sym(\widetilde{\mathcal E})$. Let us suppose that $\widetilde{\mathcal E}\neq {\mathcal E}$, so that there exists at least a point of $\widetilde{\mathcal E}$ which does not belong to $\mathcal{E}$. Let us denote such a point by $\theta$. As a first case, let us assume that $\theta\in\widetilde{\mathcal E}\setminus \{\mathcal{E}\cup\mathcal{F}\}$. Since $\sym({\mathcal E})$ is a Lie subalgebra of the Lie symmetry algebra of $\widetilde{\mathcal{E}}$, then $\mathcal{D}^{\sym(\mathcal{E})}_\theta\subseteq T_\theta\widetilde{\mathcal{E}}$. This inclusion implies that $\dim(\mathcal{D}^{\sym(\mathcal{E})}_\theta)\leq\dim(\widetilde{\mathcal E})=\dim(\mathcal{E})$, which contradicts the hypothesis.

Let us now assume that $\theta\in (\widetilde{\mathcal E}\setminus\mathcal{E})\cap\mathcal{F}$. This implies that $\widetilde{\mathcal E}\cap\mathcal{F}\neq 0$. For dimensional reasons, $\widetilde{\mathcal E}\neq\mathcal{F}$, so that there exists a point in $\widetilde{\mathcal E}\setminus\mathcal{F}$ which, in its turn, does not belong to $\mathcal{E}$. Then it is enough to apply the reasoning of the previous case.
\end{proof}
From now on we shall concentrate only on ODEs since they are the target of our
investigation.

\smallskip

We would like to underline that, if an ODE $\cE\subset J^r(1,m)$ satisfies
Proposition \ref{prop.sufficienza} and it can be put in normal form (i.e. $\cE$
is the image of a section of the bundle $J^r(1,m)\to J^{r-1}(1,m)$, in
particular, it is a determined system), then also the $k$-prolongation
$\cE^{(k)}$ of $\cE$, with $k\in\mathbb{N}$ arbitrary, is Lie
remarkable. Indeed, in this case, $\dim\cE^{(k)}=\dim\cE$ and the rank of the
distribution spanned by the symmetries cannot decrease.

Proposition \ref{prop.sufficienza} suggests that, in order to construct an
$l$-dimensional Lie remarkable equation, one can start from an
$(l+1)$-dimensional Lie algebra of vector fields on
$J^0(n,m)=\mathbb{R}^{n+m}$. This leads, in the case of computation of scalar
Lie remarkable ODEs of $r^{\textrm{th}}$ order, for dimensional reasons, to consider the
\emph{Lie determinant} associated to an $(r+2)$-dimensional Lie algebra of
vector fields (see the beginning of Section \ref{sec.proof}).  Actually, in
some cases, an $l$-dimensional ODE can be uniquely determined by an
$l$-dimensional Lie algebra, as the content of the next section shows.

\subsection{Lie remarkable ODEs determined by a lower dimensional
  Lie algebra of vector fields: foliations of equivalent ODEs and
  (pseudo)-stabilization order}

We have already seen that Lie algebra of vector fields $1.$ of \eqref{eq.real} is sufficient to completely characterize the ODE $u_{xx}=0$, up to point transformations (see the discussion after Definition \ref{def.lie.rem}). In this section we investigate other situations in which it is possible to construct Lie remarkable
ODEs starting from a Lie algebra of vector fields of lower dimension w.r.t. that of Proposition
\ref{prop.sufficienza}.

\smallskip

 Consider
realization \eqref{eq.symm.family} (in terms of vector fields on $\mathbb{R}^2$) of the Lie algebra $\mathfrak{sl}(2,\mathbb{R})$.
Second order ODEs admitting vector fields \eqref{eq.symm.family} as Lie point
symmetries are
\begin{equation}\label{eq.family}
  u_{xx}=\frac{1}{2}u_x + Ke^{-2x}u_x^3\,, \quad K\in\mathbb{R}.
\end{equation}
Equation \eqref{eq.family} and vector fields \eqref{eq.symm.family} appeared in \cite{BMM} in the context of the local classification of projective structures on a $2$-dimensional manifold.
A priori, DEs belonging to the above $1$-parametric family could be all
point-equivalent, so that one could consider the equation \eqref{eq.family}
with $K=0$ as its representative, that is the only equation (up to point
transformations) admitting the above Lie algebra of vector fields as a
subalgebra of Lie point symmetries. A deeper study shows that it is not the
case. Indeed, even if the change $x\to x+c$, for some suitable constant $c$,
allows to say that all equations \eqref{eq.family} with $K>0$ (respectively,
$K<0$) are point equivalent, for $K=0$ equation \eqref{eq.family} admits the
$8$-dimensional Lie algebra $\mathfrak{sl}(3,\mathbb{R})$ as Lie algebra of point
symmetries, so that it is not point-equivalent with any equations
\eqref{eq.family} with $K\neq 0$. We stress that the sufficient criterion given
in Proposition \ref{prop.sufficienza} is not fulfilled for $K\neq 0$, the above
equation being a $3$-dimensional submanifold of the $4$-dimensional jet space
$J^2(1,1)$. On the other hand, for $K=0$, the criterion is fulfilled if we
consider the Lie algebra $\mathfrak{sl}(3,\mathbb{R})$ and the equation
$u_{xx}=\frac{1}{2}u_x$ is Lie remarkable (note that it is equivalent to
$u_{xx}=0$ as it admits a $8$-dimensional Lie symmetry algebra). From a
theoretical viewpoint, obtaining a foliation of point-equivalent equations is
possible. For instance, the most general scalar $2^{\textrm{nd}}$ order PDE admitting the
Lie algebra $\mathfrak{s}$ linearly generated by $\{\partial_x,\partial_u,x\partial_u\}$ as a Lie symmetry subalgebra is $u_{xx}=K$, $K\in\mathbb{R}$, as $u_{xx}$ is the
only $2^{\textrm{nd}}$ order differential invariant of $\mathfrak{s}$ (up to functional
dependence).  All these equations are point equivalent to $u_{xx}=0$, as they
admit the $8$-dimensional projective Lie algebra $\mathfrak{sl}(3,\mathbb{R})$
as Lie point symmetry algebra.  Then we can say that the Lie algebra of vector fields
$\mathfrak{s}$ uniquely determines equation $u_{xx}=0$ up to point
transformations, i.e. it is associated with $\mathfrak{s}$ in the
sense of Definition \ref{def.lie.rem}.  Thus, we have the following proposition:

\begin{proposition}\label{crit}
  Let $\mathfrak{s}$ be a Lie algebra of vector fields on $J^0(1,1)$. Let us
  suppose that there exists some $r\in \mathbb{N}$ such that the rank of
  $\mathcal{D}^{\mathfrak{s}^{(r)}}$ is of codimension $1$ almost
  everywhere. Let $F$ be the unique function on $J^r(1,1)$ (up to functional
  dependence) such that $\mathcal{D}^{\mathfrak{s}^{(r)}}(F)=0$. Then if the
  equations $F=k$, $k\in\mathbb{R}$ are point equivalent each other, then
  equation $F=0$ is associated with $\mathfrak{s}$.
\end{proposition}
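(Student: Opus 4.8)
The plan is to reduce the statement to a direct application of the definition of ``associated with'' (Definition \ref{def.lie.rem}) combined with the hypothesis that the level sets of $F$ are mutually point-equivalent. First I would fix the order $r$ granted by the hypothesis, so that $\mathcal{D}^{\mathfrak{s}^{(r)}}$ has codimension $1$ almost everywhere on $J^r(1,1)$, and recall that by Theorem \ref{th.Alek.Michor} this distribution is integrable. Since its generic rank is $\dim J^r(1,1)-1$, its generic leaves are hypersurfaces in $J^r(1,1)$, that is, $r^{\textrm{th}}$ order scalar ODEs, each of which admits all of $\mathfrak{s}$ as point symmetries by construction. The function $F$ furnished by the hypothesis is, up to functional dependence, the unique first integral of the distribution, so these generic leaves are exactly the level sets $\{F=k\}$, $k\in\mathbb{R}$.

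Next I would argue that any $r^{\textrm{th}}$ order scalar ODE $\widetilde{\cE}$ admitting $\mathfrak{s}$ as a Lie subalgebra of $\sym(\widetilde{\cE})$ must, at each generic point, be tangent to $\mathcal{D}^{\mathfrak{s}^{(r)}}$: the inclusion $\mathfrak{s}\leq\sym(\widetilde{\cE})$ forces $\mathcal{D}^{\mathfrak{s}^{(r)}}_\theta\subseteq T_\theta\widetilde{\cE}$ for every $\theta\in\widetilde{\cE}$, exactly as in the proof of Proposition \ref{prop.sufficienza}. Since $\widetilde{\cE}$ is a hypersurface and $\mathcal{D}^{\mathfrak{s}^{(r)}}_\theta$ has codimension $1$ on a dense open set, the two tangent spaces coincide there, so $\widetilde{\cE}$ is an integral leaf of the distribution on that open set. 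By uniqueness of the maximal leaf through a point, $\widetilde{\cE}$ agrees locally with a level set $\{F=k\}$ for some constant $k$. This is where the codimension-$1$ hypothesis does its work, and it is the step I expect to require the most care: one has to handle the exceptional locus (the measure-zero set where the rank drops) and confirm that tangency on a dense open subset propagates, via continuity and the connectedness of the leaf, to identify $\widetilde{\cE}$ with a single level hypersurface rather than a patchwork of different ones.

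Finally I would invoke the remaining hypothesis. We have shown that every competitor $\widetilde{\cE}$ coincides, up to inclusion, with some $\{F=k\}$; and by assumption all the equations $\{F=k\}$, $k\in\mathbb{R}$, are point-equivalent to one another, hence in particular point-equivalent to $\{F=0\}$. Therefore $\{F=0\}$ is, up to inclusion and up to point transformations, the only $r^{\textrm{th}}$ order scalar ODE admitting $\mathfrak{s}$ as a Lie symmetry subalgebra, which is precisely the assertion that the equation $F=0$ is associated with $\mathfrak{s}$ in the sense of Definition \ref{def.lie.rem}. The main obstacle, as noted, is the careful treatment of the singular set of the distribution; away from it the argument is the same integrability-plus-uniqueness reasoning already used for Proposition \ref{prop.sufficienza}.
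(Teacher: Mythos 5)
Your proof is correct and is essentially the paper's own argument in expanded form: the paper's proof simply asserts that, $F$ being the unique $r^{\textrm{th}}$ order differential invariant of $\mathfrak{s}$, the equations $F=k$ are \emph{all and only} the $r^{\textrm{th}}$ order ODEs admitting $\mathfrak{s}$ as a symmetry subalgebra, and then concludes from the point-equivalence hypothesis exactly as you do. Your distribution-theoretic justification of that first step (tangency of $\mathcal{D}^{\mathfrak{s}^{(r)}}$ to any competitor, then uniqueness of leaves via Theorem \ref{th.Alek.Michor}, as in Proposition \ref{prop.sufficienza}) is a legitimate fleshing-out, and the singular-locus subtlety you flag as needing care is likewise left implicit in the paper's proof.
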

\begin{proof}
Since $F$ is, by hypothesis, the only differential invariant of $r^\textrm{th}$ order of the Lie algebra $\mathfrak{s}$, $F=k$, with $k\in\mathbb{R}$, are all and only the $r^\textrm{th}$ order DEs admitting $\mathfrak{s}$ as a Lie symmetry subalgebra. Now, if all these DEs are equivalent each other, they will be equivalent, in particular, to equation $F=0$.
\end{proof}

Below we explain another way of obtaining Lie remarkable
ODEs starting from a Lie algebra of vector fields of lower dimension w.r.t. that of Proposition
\ref{prop.sufficienza}. This is based on the fact that some $r^{\textrm{th}}$ order ODE
can be uniquely constructed by a foliation of $(r-1)^{\textrm{th}}$ order ODEs.  Below we
shall be more precise by considering a concrete example.  The equation
$\cE:=\{u_{xx}=0\}\subset J^2(1,1)$ can be uniquely constructed starting from
the foliation of $J^1(1,1)$ given by $\mathcal{F}_k=0$, where
$\mathcal{F}_k:=u_x-k$, $k\in\mathbb{R}$. Indeed, we have that $\cE=\bigcup_k
\mathcal{F}_k^{(1)}$.  Even though we can construct the equation $u_{xx}=0$
starting from such a foliation of $J^1(1,1)$, the two Lie point symmetries
$\partial_x$ and $\partial_u$ of $u_x=k$ are not enough to determine the
equation $u_{xx}=0$ in the Lie remarkable sense (i.e. there are many $2^{\textrm{nd}}$
order ODEs associated with the Lie algebra of translations) although they
determine uniquely the foliation $\mathcal{F}_k=0$ of $J^1(1,1)$.  To
completely determine equation $u_{xx}=0$ in the Lie remarkable sense, the
foliation $u_x=k$ given by a Lie algebra of vector fields should live in
$J^2(1,1)$ rather than in $J^1(1,1)$ (then, in this case, we need $3$
symmetries rather than $2$). Indeed, such a foliation and equation $u_{xx}=0$
represent, essentially, the same object in $J^2(1,1)$.  This trivial
observation leads to an interesting consequence. If we find a $3$-dimensional
Lie algebra of vector fields such that its orbits in $J^2(1,1)$ form the
foliation $u_x=k$, then such algebra completely determines the equation
$u_{xx}=0$. This Lie algebra of vector fields exists: indeed, the $3$ infinitesimal homotheties
of $\mathbb{R}^2$ (two translations and the stretching) are Lie point
symmetries of $u_x=k$ for any $k$. This $3$-dimensional Lie algebra is enough
to completely determine $u_{xx}=0$ in the Lie remarkable sense. So, we
constructed a Lie remarkable equation of dimension $3$ starting from a
$3$-dimensional Lie algebra. In more precise words, we have the following
proposition.

\begin{proposition}\label{prop.sufficienza.2}
  Let $\mathfrak{s}$ be a Lie algebra of vector fields on $J^0(1,1)$. Let us
  suppose that there exists an integer $r$ such that the system
  $\mathfrak{s}^{(r)}(F)=0$ has a unique solution $F$ (up to functional
  dependence).  If $F\in C^\infty(J^{r-1}(1,1))$ but $F\notin
  C^\infty(J^{r-2}(1,1))$, then equation $D_x(F)=0$ is Lie remarkable and it is
  associated with $\mathfrak{s}$.
\end{proposition}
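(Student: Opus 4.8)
The plan is to show two things: first, that $F$ being a function on $J^{r-1}(1,1)$ (but genuinely of order $r-1$) means the level sets $F=k$ form a foliation of $J^{r-1}(1,1)$ by $(r-1)^{\textrm{th}}$ order ODEs all admitting $\mathfrak{s}$ as symmetries; second, that taking the total derivative $D_x(F)=0$ assembles these leaves into a single $r^{\textrm{th}}$ order ODE in $J^r(1,1)$ which is pinned down uniquely by $\mathfrak{s}$. The key observation driving everything is the relation $\cE := \{D_x(F)=0\} = \bigcup_k (\mathcal{F}_k)^{(1)}$, where $\mathcal{F}_k := \{F=k\}\subset J^{r-1}(1,1)$ and $(\mathcal{F}_k)^{(1)}$ denotes the prolongation to $J^r$. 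This is exactly the mechanism illustrated by the $u_x=k \rightsquigarrow u_{xx}=0$ example preceding the statement.

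First I would verify that $D_x(F)=0$ really does admit $\mathfrak{s}$ as a Lie point symmetry subalgebra. Since $\mathfrak{s}^{(r)}(F)=0$ by hypothesis, each $X\in\mathfrak{s}$ is a symmetry of every leaf $\mathcal{F}_k$; because point symmetries commute with total derivatives (the prolongation $X^{(r)}$ preserves the contact distribution, whose generators are the $D_\lambda$), the prolonged field $X^{(r+1)}$ annihilates $D_x(F)$ on the equation, so $\mathfrak{s}\leq\sym(\cE)$. The hypothesis $F\notin C^\infty(J^{r-2}(1,1))$ guarantees that $D_x(F)$ genuinely involves the top-order coordinate $u^1_r$, so $\cE$ is a bona fide $r^{\textrm{th}}$ order equation that can be put in normal form (it is the image of a section of $J^r(1,1)\to J^{r-1}(1,1)$), hence $\dim\cE=\dim J^{r-1}(1,1)$.

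The heart of the argument is uniqueness. Suppose $\widetilde{\cE}\subset J^r(1,1)$ is another $r^{\textrm{th}}$ order equation of the same dimension admitting $\mathfrak{s}$, and suppose it is not equal to $\cE$. The idea is to project down: the symmetries $\mathfrak{s}^{(r)}$ have, by hypothesis, exactly one invariant $F$ of order $r$, meaning the orbits of $\mathfrak{s}^{(r)}$ in $J^r(1,1)$ are the level sets of $F$ \emph{lifted}, i.e. the prolongations $(\mathcal{F}_k)^{(1)}$. Since $\widetilde{\cE}$ is $\mathfrak{s}$-invariant and of codimension one, its tangent spaces contain the distribution $\mathcal{D}^{\mathfrak{s}^{(r)}}$; an integral-manifold argument (invoking Theorem \ref{th.Alek.Michor} for the integrability of $\mathcal{D}^{\mathfrak{s}^{(r)}}$, exactly as in Proposition \ref{prop.sufficienza}) forces $\widetilde{\cE}$ to be a union of such orbit-leaves $(\mathcal{F}_k)^{(1)}$. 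But each leaf $(\mathcal{F}_k)^{(1)}$ is the $1$-prolongation of a determined $(r-1)^{\textrm{th}}$ order equation, hence already $(r-1)$-dimensional as a submanifold of $J^{r-1}$ prolonged; a connected determined $r^{\textrm{th}}$ order equation in normal form can meet this foliation in only one leaf unless it equals $D_x(F)=0$. The cleanest formulation is that any $\mathfrak{s}^{(r)}$-invariant hypersurface must be a level set $F=\text{const}$ prolonged, and the only way to glue these into a section of $J^r\to J^{r-1}$ is $D_x(F)=0$; this then coincides with $\cE$, up to point transformations and inclusion, giving the ``associated'' conclusion of Definition \ref{def.lie.rem}.

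The main obstacle I anticipate is making the gluing step fully rigorous: one must argue that an $\mathfrak{s}$-invariant \emph{section} of $J^r(1,1)\to J^{r-1}(1,1)$ is forced to be $D_x(F)=0$ and nothing else. The subtlety is that a priori $\widetilde{\cE}$ could be invariant without being a single $F$-level-set-prolongation — it must instead select, over each fiber of $J^r\to J^{r-1}$, exactly the value of $u^1_r$ determined by the contact/total-derivative condition. Here the crucial input is precisely that $F$ depends on order $r-1$ only (the condition $F\notin C^\infty(J^{r-2})$), so that $D_x(F)$ is affine in $u^1_r$ with nonvanishing coefficient, which makes the normal-form section unique. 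I would therefore spend the most care verifying that the single invariant $F$ of order $r$, being pulled back from order $r-1$, leaves exactly one degree of freedom in the top jet variable, pinned by the contact condition — this is what rules out any competing section $\widetilde{\cE}$ and completes the proof.
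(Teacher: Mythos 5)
Your opening steps are sound: showing $\mathfrak{s}\leq\sym(\{D_x(F)=0\})$ via the commutation of prolongations with the total derivative is correct in spirit (and since $D_x(F)$ has order $r$, the identity $X^{(r)}(D_xF)=D_x(X^{(r-1)}F)-D_x(X^x)\,D_xF$ makes $X^{(r)}$, not $X^{(r+1)}$, the right object; it vanishes on the equation as needed). The genuine gap is in the uniqueness step, and it begins with a misidentification of the orbits. Since $F$ is the \emph{unique} invariant of $\mathfrak{s}^{(r)}$, the generic orbits of $\mathcal{D}^{\mathfrak{s}^{(r)}}$ in $J^r(1,1)$ have codimension one: they are open pieces of the cylinders $\{F=k\}\subset J^r(1,1)$, i.e., full preimages of the level sets under $J^r(1,1)\to J^{r-1}(1,1)$ --- not the prolonged leaves $(\mathcal{F}_k)^{(1)}=\{F=k,\ D_xF=0\}$, which have codimension two (dimension $r$, not $r-1$ as you wrote) and are \emph{singular} orbits, sitting exactly where the rank of the distribution drops below its generic value $r+1$. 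Consequently your claim that invariance forces $\widetilde{\cE}$ to be a union of prolonged leaves does not follow from the integral-manifold argument you invoke. What that argument actually yields is different, and it is the pivot of the paper's proof: at a point of $\widetilde{\cE}$ where the rank is maximal, tangency to a codimension-one integrable distribution forces $\widetilde{\cE}$ to coincide locally with a level set of $F$; but $F$ has order $r-1$, so such a level set is a cylinder containing whole fibers of $J^r\to J^{r-1}$, which is impossible for an equation in normal form (a section of $J^r\to J^{r-1}$). Hence any competing order-$r$ equation must lie entirely inside $\mathrm{sing}(\mathfrak{s}^{(r)})$, the locus where the rank of $\mathfrak{s}^{(r)}$ is not maximal.

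Your closing paragraph flags the right worry but resolves it with a non sequitur: that $D_x(F)$ is affine in the top coordinate $u_r$ with nonvanishing coefficient $\partial F/\partial u_{r-1}$ shows that $\{D_xF=0\}$ \emph{is} a section, but it does not exclude other invariant sections. The missing argument, which is how the paper finishes, is a description of the singular set itself: $\mathrm{sing}(\mathfrak{s}^{(r)})$ is the common zero locus of the $(r+1)\times(r+1)$ minors of the matrix $\mathcal{M}_{\mathfrak{s}^{(r)}}$ of prolonged generators; because the top prolongation components are of first degree in the highest derivative, each such minor is affine in $u_r$, and since $\{D_xF=0\}$ is contained in the singular set while $D_xF$ is itself affine in $u_r$ with nonvanishing leading coefficient, every minor factors as $G_i\cdot D_x(F)$ with $G_i$ of order at most $r-1$. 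Therefore the only hypersurface of order exactly $r$ inside $\mathrm{sing}(\mathfrak{s}^{(r)})$ is $\{D_xF=0\}$: the loci $G_i=0$ are again lower-order cylinders, which cannot contain a section unless some $G_i$ vanishes identically on an open set, contradicting maximality of the rank almost everywhere. Without this two-step mechanism --- competitor forced into the singular set, then the singular set shown to contain a unique order-$r$ hypersurface --- your ``gluing'' step remains an assertion rather than a proof.
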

\begin{proof}
  Since $\mathfrak{s}^{(r)}(F)=0$, we have that $\mathfrak{s}\subset
  \sym(\{F=k\})\,\,\forall\,k\in\mathbb{R}$, that implies
  $\mathfrak{s}\subset\sym(\{D_x(F)=0\})$. Thus, on one hand
  $\mathfrak{s}^{(r)}$ spans almost everywhere a codimension $1$ distribution
  on $J^r(1,1)$ and, on the other hand, it is tangent to the hypersurface
  $\{D_x(F)=0\}$. Since the only solution to $\mathfrak{s}^{(r)}(F)=0$ is an
  $F$ of order $(r-1)$ but not $(r-2)$, $D_x(F)$ is of order $r$ (but not
  $(r-1)$), so that the only possibility for $\mathfrak{s}$ to be a subalgebra
  of Lie point symmetries of $\{D_x(F)=0\}$ is that such equation is contained
  in the set
$$
\mathrm{sing}(\mathfrak{s}^{(r)}):=\{\text{points of $J^r(1,1)$ where the rank
  of $\mathfrak{s}^{(r)} $is not maximal}\}.
$$
By hypothesis, the rank of $\mathfrak{s}^{(r)}$ is maximal almost everywhere
and it is equal to $\dim J^r(1,1)-1=r+1$.  Below we see that the set
$\mathrm{sing}(\mathfrak{s}^{(r)})$ contains only a hypersurface of order $r$
which is decribed by $\{D_x(F)=0\}$. This comes from the fact that
$\mathrm{sing}(\mathfrak{s}^{(r)})$ is given by a system $\{D_x(F)G_i=0\}$,
where $G_i$ are smooth functions on $J^{r-1}(1,1)$. In fact, the set
$\mathrm{sing}(\mathfrak{s}^{(r)})$ is described by the system formed by all
determinants of $(r+1)\times (r+1)$ submatrices of
$\mathcal{M}_{\mathfrak{s}^{(r)}}$ (see \eqref{eq.matrice.prolungamenti} for
the definition) equating to zero. Since in the last column of
$\mathcal{M}_{\mathfrak{s}^{(r)}}$ the highest order derivatives appears at
first degree, determinants of submatrices containing elements of the last
column are polynomial of first degree in the highest order derivatives. Since
$D_x(F)$ is exactly of order $r$ (the highest), functions $G_i$ cannot be of
highest order. We conclude that equation $\{D_x(F)=0\}$ is the only
hypersurface of order $r$ contained in $\mathrm{sing}(\mathfrak{s}^{(r)})$.

In view of the above reasonings, any other $r^{\textrm{th}}$ order scalar ODE $\mathcal{E}$
admitting $\mathfrak{s}$ as a Lie symmetry subalgebra is such that
$\mathcal{E}\subset \mathrm{sing}(\mathfrak{s}^{(r)})$.  Being $\mathcal{E}$ a
hypersurface of order $r$, in view of the above conclusion we have
$\mathcal{E}=\{D_x(F)=0\}$.
\end{proof}

\begin{remark}\label{rem.nec.cond.suff.2}
  To satisfy the hypotheses of Proposition \ref{prop.sufficienza.2}, a
  necessary condition is that, almost everywhere,
  $\rank(\mathcal{M}_{\mathfrak{s}^{(r-1)}})=r$ and
  $\rank(\mathcal{M}_{\mathfrak{s}^{(r)}})=r+1$. Lie algebras which
  \emph{pseudo-stabilizes} in the sense specified in \cite{Olver2} satisfies
  such condition.
\end{remark}

\begin{remark}\label{rem.controesempio}
  The fact that $\mathfrak{s}^{(r)}(F)=0$ has a unique solution which, in its
  turn, is exactly of $(r-1)^{\textrm{th}}$ order, implies also that
  $\mathfrak{s}^{(r-1)}(F)=0$. One can ask if we can weaker such hypothesis by
  assuming only that $\mathfrak{s}^{(r-1)}(F)=0$ has a unique solution of
  $(r-1)^{\textrm{th}}$ order. In this case there are examples showing that $D_x(F)=0$ is not
  Lie remarkable. We construct such an example immediately after
  \eqref{eq.diff.inv.1}.
\end{remark}

\begin{example}
  Let us consider the Lie algebra $\mathfrak{s}$ linearly generated by the following
  vector fields on $J^0(1,1)$:
$$
\partial_x, \quad \partial_u, \quad x\partial_u, \quad x\partial_x
+2u\partial_u\,.
$$
We have that
$$
\mathcal{M}_{s^{(3)}}=\left(
  \begin{array}{ccccc}
    1 & 0 & 0 & 0 & 0\\
    0 & 1 & 0 & 0 & 0\\
    0 & x & 1 & 0 & 0\\
    x & 2u & u_x & 0 & -u_{xxx}
  \end{array}
\right)\,.
$$
The above matrix has rank equal to $4$ outside the hypersurface $u_{xxx}=0$. We
also note that the only solution to system $\mathfrak{s}^{(3)}(F)$ is
$F=F(u_{xx})$. Thus, in view of Proposition \ref{prop.sufficienza.2}, we obtain
that $u_{xxx}=0$ is the only $3^{\textrm{rd}}$ order ODE associated with $\mathfrak{s}$.
\end{example}

The phenomenon described above does not appear in the case of PDEs as, in
general, one needs several foliations of $1^{\textrm{st}}$ order PDEs to reconstruct a
$2^{\textrm{nd}}$ order PDE. In fact, the above construction, for PDEs, corresponds to the
existence of intermediate integrals, as, by definition, an intermediate
integral of an $r^{\textrm{th}}$ order PDE $\cE$ is a function $f$ on the jet space of order
$r-1$ such that all solutions of the family $f=c$, $c\in\mathbb{R}$ are also
solutions of $\cE$. For instance, if a Monge-Amp\`ere equation with two
independent variables admits two special intermediate integrals, then it can be
reconstructed starting from them (see for instance \cite{AMP_Duke}).

\subsection{Lie remarkability and existence of first
  integrals}\label{sec.first.int}

In view of all that we said so far, it is expectable that Lie remarkable DEs
must have a suitable number of symmetries in order to be uniquely
determined. This implies, in the case of scalar ODEs, also the existence of
first integrals, as the following proposition shows.

\begin{proposition}\label{prop.simm.first}
  Let us consider the following ODE
  \begin{equation}\label{eq.ODE.n}
    u_{(n)}=f(x,u,u_{(1)},\dots,u_{(n-1)})\,, \quad\text{where}\,\,u_{(k)}:=\frac{d^ku}{dx^k} 
  \end{equation}
  Let $X_1,\dots,X_m$, $m\geq n$, be (point) symmetries of
  \eqref{eq.ODE.n}. Then,
  \begin{equation}\label{eq.first.int}
    \frac{X_{i_1}\lrcorner X_{i_2} \lrcorner \cdots \lrcorner X_{i_n}
      \lrcorner Z \lrcorner \Omega}{X_{j_1}\lrcorner X_{j_2} \lrcorner
      \cdots \lrcorner X_{j_n}\lrcorner Z \lrcorner \Omega},
  \end{equation}
  where
$$
Z:=\partial_x + u_{(1)}\partial_{u} + \cdots + u_{(n-1)}\partial_{u_{(n-2)}} +
f\partial_{u_{(n-1)}}\,, \,\,\Omega=dx\wedge du \wedge du_{(1)}\wedge \cdots
\wedge du_{(n-1)},
$$
is a first integral of \eqref{eq.ODE.n}.
\end{proposition}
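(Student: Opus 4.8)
The plan is to show that the ratio in~\eqref{eq.first.int} is annihilated by the total-derivative field $Z$, since a function on the jet space is a first integral of~\eqref{eq.ODE.n} precisely when it is constant along the prolonged solutions, i.e.\ when its $Z$-derivative vanishes. The field $Z$ is exactly the restriction to the equation $\cE=\{u_{(n)}=f\}$ of the total derivative operator $D_x$; its integral curves are the $1$-jets of solutions, so $Z(\Phi)=0$ characterizes first integrals $\Phi$.

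First I would record the geometric meaning of the numerator and denominator. The form $\Omega=dx\wedge du\wedge du_{(1)}\wedge\cdots\wedge du_{(n-1)}$ is a volume form on $\cE$, identified with $J^{n-1}(1,1)$ via the coordinates $(x,u,u_{(1)},\dots,u_{(n-1)})$, which is an $(n{+}1)$-dimensional manifold. Contracting $\Omega$ successively with $Z$ and with $n$ of the symmetry fields $X_{i_1},\dots,X_{i_n}$ produces a $1$-form; contracting once more would give a function, so both numerator and denominator in~\eqref{eq.first.int} are functions (top-degree contractions of the $(n{+}1)$-form $\Omega$ by $n{+}1$ vector fields). Thus the ratio is a genuine function on $\cE$, and it is well defined wherever the denominator is nonzero.

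The key computational step is to differentiate the ratio along $Z$. I would use the Cartan formula $\cL_Z = d\,i_Z + i_Z\,d$ together with the fact that each $X_{i_a}$, being a symmetry of~\eqref{eq.ODE.n}, satisfies $[Z,X_{i_a}^{(n-1)}]\in\langle Z\rangle$ modulo the span of $Z$ (the prolonged symmetry preserves the contact/total-derivative direction on $\cE$). The cleanest route is to observe that $Z\lrcorner\Omega$ is, up to sign, a closed $(n)$-form on $\cE$ — indeed $\cL_Z\Omega = (\operatorname{div}_\Omega Z)\,\Omega$, and for $Z=D_x|_\cE$ one has $\operatorname{div}_\Omega Z = \partial f/\partial u_{(n-1)}$, so the divergence factor is \emph{the same scalar} $\mu:=\partial f/\partial u_{(n-1)}$ regardless of which symmetries are contracted. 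Carrying out $\cL_Z$ of the numerator, every term in which $\cL_Z$ hits one of the $X_{i_a}$ contributes a bracket $[Z,X_{i_a}]$ that lies in $\langle Z\rangle$ and hence dies against the already-present $Z\lrcorner$ (contraction of $\Omega$ by $Z$ twice gives zero), while the single term where $\cL_Z$ hits $Z\lrcorner\Omega$ produces the common factor $\mu$ times the original numerator. The same holds verbatim for the denominator.

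Consequently $Z(\text{numerator}) = \mu\cdot(\text{numerator})$ and $Z(\text{denominator}) = \mu\cdot(\text{denominator})$, so by the quotient rule the common multiplier $\mu$ cancels and $Z$ of the ratio vanishes identically. \textbf{The main obstacle} I expect is the bracket bookkeeping: one must verify carefully that a point symmetry $X$ of~\eqref{eq.ODE.n}, once prolonged and restricted to $\cE$, satisfies $[Z,X]=\rho X$-type relations lying in the $Z$-direction modulo terms killed by the double $Z$-contraction, and that the Lie-derivative of $Z\lrcorner\Omega$ really yields the single scalar factor $\mu$ independent of the contracting fields. Once these two facts are in hand, the cancellation is automatic and the ratio is a first integral on the open set where the denominator does not vanish.
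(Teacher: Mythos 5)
Your proposal is correct and follows essentially the same route as the paper: the paper computes $L_Z\bigl(N^{-1}\Omega\bigr)=0$ for $N=X_{i_1}\lrcorner\cdots\lrcorner X_{i_n}\lrcorner Z\lrcorner\Omega$, using exactly your two key facts --- that $[Z,X_{i_k}]$ is proportional to $Z$ (so bracket terms die against the double $Z$-contraction) and that $L_Z\Omega=\mathrm{div}(Z)\,\Omega$. Your repackaging as $Z(N)=\mu N$ with $\mu=\partial f/\partial u_{(n-1)}$ for both numerator and denominator, followed by the quotient rule, is the same computation in scalar form, and your identification of the divergence is a correct small addition the paper leaves implicit.
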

\begin{proof}
  We observe that
  \begin{equation*}
    L_Z\left( \frac{1}{X_{i_1}\lrcorner X_{i_2} \lrcorner \cdots
        \lrcorner X_{i_n}\lrcorner Z \lrcorner \Omega}\,\Omega \right)=0.
  \end{equation*}
  Indeed,
  \begin{multline*}
    L_Z\left( \frac{1}{X_{i_1}\lrcorner X_{i_2} \lrcorner \cdots \lrcorner
        X_{i_n}\lrcorner Z \lrcorner \Omega}\,\Omega \right)
    =-\sum_k \frac{X_{i_1}\lrcorner \cdots \lrcorner
      X_{i_{k-1}}\lrcorner[Z,X_{i_k}]\lrcorner X_{i_{k+1}}\lrcorner \cdots
      \lrcorner Z\lrcorner\Omega}{(X_{i_1}\lrcorner X_{i_2} \lrcorner \cdots
      \lrcorner X_{i_n}\lrcorner Z \lrcorner \Omega)^2}\Omega
    \\
    -\frac{X_{i_1}\lrcorner X_{i_2} \lrcorner \cdots \lrcorner X_{i_n}\lrcorner
      Z \lrcorner L_Z\Omega}{(X_{i_1}\lrcorner X_{i_2} \lrcorner \cdots
      \lrcorner X_{i_n}\lrcorner Z \lrcorner \Omega)^2} \Omega +
    \frac{1}{X_{i_1}\lrcorner X_{i_2} \lrcorner \cdots \lrcorner
      X_{i_n}\lrcorner Z \lrcorner \Omega}L_Z\Omega =0.
  \end{multline*}
  We obtained the last equality in view of the following facts. Since $X_{i_k}$
  are symmetries of \eqref{eq.ODE.n}, $[Z,X_{i_k}]$ is proportional to $Z$ for
  any $1\leq i_k\leq m$, so that
$$
X_{i_1}\lrcorner \cdots \lrcorner X_{i_{k-1}}\lrcorner[Z,X_{i_k}]\lrcorner
X_{i_{k+1}}\lrcorner \cdots \lrcorner Z\lrcorner\Omega=0.
$$
Moreover, $L_Z\Omega=\textrm{div}(Z)\Omega$.
\end{proof}

\begin{corollary}
  Lie remarkable equations constructed by means of Proposition
  \ref{prop.sufficienza} or \ref{prop.sufficienza.2} possess first integrals of
  type \eqref{eq.first.int}.
\end{corollary}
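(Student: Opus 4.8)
The plan is to reduce everything to Proposition~\ref{prop.simm.first}: once we know that an equation produced by either construction is a scalar ODE that can be put in the normal form \eqref{eq.ODE.n} of some order $r$, and that it admits at least $r$ independent point symmetries, the formula \eqref{eq.first.int} immediately furnishes a first integral (identifying the order $n=r$ and the number of symmetries $m=\dim\sym(\cE)$ in the notation of that proposition). Thus the whole proof amounts to checking, in each of the two cases, that $\dim\sym(\cE)\geq r$, and then to guaranteeing that the denominator in \eqref{eq.first.int} can be chosen nonzero.

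For an equation $\cE\subset J^r(1,1)$ obtained from Proposition~\ref{prop.sufficienza}, I would argue as follows. Being a scalar ODE in normal form, $\cE$ is a hypersurface of the $(r+2)$-dimensional space $J^r(1,1)$, so $\dim\cE=r+1$. The defining hypothesis of Proposition~\ref{prop.sufficienza} is $\dim\mathcal{D}^{\sym(\cE)}_\theta>\dim\cE=r+1$ for almost every $\theta$, and since $\dim\sym(\cE)\geq\dim\mathcal{D}^{\sym(\cE)}_\theta$ always holds, we obtain $\dim\sym(\cE)\geq r+2>r$. Hence there are more than $r$ symmetries and the hypothesis $m\geq n$ of Proposition~\ref{prop.simm.first} is met with room to spare.

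For an equation $\cE=\{D_x(F)=0\}$ obtained from Proposition~\ref{prop.sufficienza.2}, recall from that proof that $\mathfrak{s}\leq\sym(\cE)$ and that $\mathfrak{s}^{(r)}$ spans, almost everywhere, a codimension-one distribution on $J^r(1,1)$, i.e. $\rank(\mathcal{M}_{\mathfrak{s}^{(r)}})=r+1$. Since the rank of $\mathcal{M}_{\mathfrak{s}^{(r)}}$ cannot exceed the number of its rows $\dim\mathfrak{s}$, this forces $\dim\mathfrak{s}\geq r+1$, whence $\dim\sym(\cE)\geq r+1>r$ and again $m\geq n$. In both cases Proposition~\ref{prop.simm.first} then applies and yields a first integral of type \eqref{eq.first.int}.

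The delicate point — and the one I expect to be the real obstacle — is the nonvanishing of the denominator $X_{j_1}\lrcorner\cdots\lrcorner X_{j_r}\lrcorner Z\lrcorner\Omega$ in \eqref{eq.first.int}. Since $\Omega$ is a volume form on the $(r+1)$-dimensional space $J^{r-1}(1,1)$, this quantity equals, up to sign, the determinant of the components of the $(r-1)$-prolongations of $X_{j_1},\dots,X_{j_r}$ together with $Z$, and it is nonzero exactly when these $r+1$ fields span $T_\theta J^{r-1}(1,1)$. For the construction of Proposition~\ref{prop.sufficienza} this is automatic: the prolongations $\sym(\cE)^{(r)}$ span $T_\theta J^r(1,1)$ almost everywhere, hence their projections $\sym(\cE)^{(r-1)}$ span $T_\theta J^{r-1}(1,1)$, and since $Z\neq 0$ one may trade one spanning symmetry for $Z$ and still keep a basis, making the denominator nonzero. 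For Proposition~\ref{prop.sufficienza.2} the subalgebra $\mathfrak{s}$ alone does \emph{not} suffice, because $\mathfrak{s}^{(r-1)}$ has rank only $r$ on $J^{r-1}(1,1)$ (cf. Remark~\ref{rem.nec.cond.suff.2}) and, together with $Z$, remains tangent to the invariant hypersurfaces $F=\mathrm{const}$; here one must use the full algebra $\sym(\cE)$, which is strictly larger and whose prolongation attains the full rank $r+1$ on $J^{r-1}(1,1)$ generically, so that a suitable choice of indices $j_1,\dots,j_r$ makes the denominator nonzero. Establishing this generic full rank of $\sym(\cE)^{(r-1)}$ is the technical heart of the argument.
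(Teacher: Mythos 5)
Your core argument is exactly the paper's: the corollary is stated there without proof, being immediate from Proposition~\ref{prop.simm.first} once one counts symmetries, and your counts are correct --- the hypothesis of Proposition~\ref{prop.sufficienza} forces $\dim\sym(\cE)\geq\dim\mathcal{D}^{\sym(\cE)}_\theta\geq r+2$ at generic $\theta$, while in Proposition~\ref{prop.sufficienza.2} the generic rank $r+1$ of $\mathcal{M}_{\mathfrak{s}^{(r)}}$ forces $\dim\mathfrak{s}\geq r+1$, so in both cases the number of symmetries exceeds the order $n=r$ and \eqref{eq.first.int} applies. Where you depart from the paper is the denominator analysis, which the paper does not undertake at all (Remark~\ref{rem.sufficient.numbers} only addresses triviality of the ratio, not its definedness). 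Your extra observations are substantively correct and in fact expose a degeneracy the paper glosses over: in the setting of Proposition~\ref{prop.sufficienza.2}, every $X\in\mathfrak{s}$ satisfies $X^{(r-1)}(F)=0$ and also $Z(F)=0$ (since $Z(F)$ is $D_x(F)$ restricted to the equation), so all $r+1$ contracted fields are tangent to the codimension-one foliation $F=\mathrm{const}$ of $J^{r-1}(1,1)$ and \emph{every} determinant in \eqref{eq.first.int} built from $\mathfrak{s}$ alone vanishes identically --- one checks this concretely on the paper's example $u_{xxx}=0$ with $\mathfrak{s}=\{\partial_x,\partial_u,x\partial_u,x\partial_x+2u\partial_u\}$, where all five fields on $J^2(1,1)$ have vanishing $\partial_{u_{xx}}$-component. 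Your remedy (pass to the full $\sym(\cE)$, e.g.\ the $7$-dimensional algebra of $u_{xxx}=0$, which contains $u\partial_u$ with prolongation $(0,u,u_x,u_{xx})$) works in all the paper's examples, but, as you yourself flag, the generic full rank of $\sym(\cE)^{(r-1)}$ is not established in general, and it does not follow from the hypotheses of Proposition~\ref{prop.sufficienza.2} alone (nothing there excludes $\sym(\cE)=\mathfrak{s}$ a priori). Relative to the paper's own standard of proof this is not a gap in your argument --- the paper proves strictly less --- but you should present that last step as an additional hypothesis or verify it case by case, rather than as something the corollary's hypotheses deliver; by contrast, your exchange-lemma argument in the Proposition~\ref{prop.sufficienza} case is complete, since there the $r$-prolongations span $T_\theta J^r(1,1)$ off $\cE\cup\mathcal{F}$ and hence their projections span $T J^{r-1}(1,1)$.
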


\begin{remark}\label{rem.sufficient.numbers}
  Proposition \ref{prop.simm.first} says that if we have a sufficient number of
  point symmetries we can construct first integrals. In the case we have
  exactly $n$ symmetries, \eqref{eq.first.int} is a constant, so it is a
  trivial first integral. The above construction starts to produce non-trivial
  first integrals when we have at least $n+1$ symmetries.
\end{remark}

Below, as an example of computation, we use Proposition \ref{prop.simm.first}
to construct first integrals of equations \eqref{eq.family}. We recall that Lie
point symmetries of \eqref{eq.family} are \eqref{eq.symm.family} so that the
following integrals are obtained:
\begin{equation*}
  I_1=\frac{\det \left(
      \begin{array}{ccc}
        1 & u_x & \frac{1}{2}u_x+Ke^{-2x}u_x^3
        \\
        0 & 1 & 0
        \\
        1 & u & u_x
      \end{array}
    \right)}{\det \left(
      \begin{array}{ccc}
        1 & u_x & \frac{1}{2}u_x+Ke^{-2x}u_x^3
        \\
        0 & 1 & 0
        \\
        u & \frac{1}{2}u^2 & u_xu-u_x^2
      \end{array}
    \right)} = \frac{2Ke^{-2x}u_x^2-1}{u(2Ke^{-2x}u_x^2-1)+2u_x},
\end{equation*}
and
\begin{equation*}
  I_2=\frac{\det \left(
      \begin{array}{ccc}
        1 & u_x & \frac{1}{2}u_x+Ke^{-2x}u_x^3
        \\
        0 & 1 & 0
        \\
        1 & u & u_x
      \end{array}
    \right)}{\det \left(
      \begin{array}{ccc}
        1 & u_x & \frac{1}{2}u_x+Ke^{-2x}u_x^3
        \\
        1 & u & u_x
        \\
        u & \frac{1}{2}u^2 & u_xu-{u}_x^2
      \end{array}
    \right)} = 2\frac{2Ke^{-2x}{u}_x^2-1}{u^2(2Ke^{-2x}{u}_x^2-1)+4u_x(u-u_x)}.
\end{equation*}

\section{Proof of Theorem \ref{theo.main} and Theorem
  \ref{thm:systems}}\label{sec.proof}

Let us consider the euclidean space $\mathbb{R}^{1+m}$. From now on, we
interpret such a space as the space of $1$ independent variable and $m$
dependent variables, \emph{i.e.}, $\mathbb{R}^{1+m}=J^0(1,m)$. In this section,
by considering Propositions \ref{prop.chissa}, \ref{prop.sufficienza},
\ref{prop.sufficienza.2} and \ref{crit}, we construct Lie remarkable
ODEs by starting from a given Lie algebra $\mathfrak{s}$ of vector fields on
$\mathbb{R}^{1+m}$.  More precisely, we start by taking into account primitive
Lie algebras of vector fields on $\mathbb{R}^{2}$ and construct (when possible)
the corresponding associated Lie remarkable scalar ODEs. Then we consider the
isometric, special conformal, affine and projective algebra of $\R^{1+m}$ in
order to obtain Lie remarkable systems of ODEs.

\smallskip The dimension of the Lie algebras of vector fields we are
going to consider is a function of the number of the dependent variables, so
that, as a first step, we ask ourselves the following question: by using
Proposition \ref{prop.sufficienza} or \ref{prop.sufficienza.2}, which
combinations of integer values of $r$ and $m$ can produce a Lie remarkable
system of $m$ ODEs of order $r$ associated with the given Lie algebra of vector fields
$\mathfrak{s}$? We realize that such systems will be found as $m$-codimensional
submanifolds of the jet space $J^r(1,m)$ such that
\begin{equation}
  \dim \mathfrak{s} \geq \dim J^r(1,m)-m=rm+1.
\end{equation}
After this dimensional estimation, we obtain the Lie remarkable system
$\mathcal{E}$ associated with the considered Lie algebra of vector fields $\mathfrak{s}$ by
imposing on $\mathcal{E}$ the conditions of Proposition \ref{prop.sufficienza}
or \ref{prop.sufficienza.2}. In particular, in order to satisfy conditions of
Proposition \ref{prop.sufficienza}, we find the set
$\textrm{sing}_k(\mathfrak{s}^{(r)})$ of points of $J^r(1,m)$ where the rank of
the distribution associated with the prolonged Lie algebra $\mathfrak{s}^{(r)}$
of the chosen Lie algebra of vector fields $\mathfrak{s}$ is at most equal to $k$, where
$k\in\mathbb{N}$ is smaller than the maximal rank of $\mathfrak{s}^{(r)}$. The
Lie remarkable equation, whether it exists, is one of this singular set as the
vector fields in $\mathfrak{s}^{(r)}$ are tangent to the set
$\textrm{sing}_k(\mathfrak{s}^{(r)})$, provided that the latter is a
submanifold of $J^r(1,m)$ (see also Theorem \ref{th.Alek.Michor}.


The computation of the singular set leads to algebraic computations that we
solve with the help of computer algebra \cite{Relie}.  The rank condition of
Proposition \ref{prop.sufficienza} is enforced as a system of equations
obtained from the vanishing of determinants of all square submatrices (minors),
of suitable dimension, of the matrix of prolonged generators
$\mathcal{M}_{\mathfrak{s}^{(r)}}$ (see \eqref{eq.matrice.prolungamenti}). We
underline that, in the scalar case and for the considered Lie algebras, the
method described above reduces to that of Lie determinant, i.e. Lie remarkable
equations of order at most $r$ associated with an $(r+2)$-dimensional Lie
algebra of vector fields are described by the vanishing of the determinant of the $(r+2)\times
(r+2)$ matrix $\mathcal{M}_{\mathfrak{s}^{(r)}}$.

\subsection{Primitive Lie algebras and Lie remarkable scalar ODEs:
  proof of Theorem \ref{theo.main}}\label{sec.primitive}

Let us set $m=1$. Starting from the primitive (non-singular) Lie algebras of
vector fields of $\mathbb{R}^2$, we construct Lie remarkable scalar ODEs taking
Propositions \ref{prop.sufficienza}, \ref{prop.sufficienza.2} and
\ref{crit} into account. For the convenience of the reader, below we report the
list of the afore-mentioned Lie algebras, which can be found in \cite{Olver2}.
\begin{equation}\label{eq.primitive}
  \begin{tabular}{|c|l|l|}
    \hline
    & \text{\hspace{4cm} Generators} & \text{Structure} \\
    \hline
    \textbf{I} & $\partial_x,\partial_u,u\partial_x-x\partial_u+\alpha(x\partial_x+u\partial_u), \, \alpha\in\mathbb{R}$ & $\mathbb{R}\ltimes\mathbb{R}^2$ \\
    \hline
    \textbf{II} & $\partial_x,x\partial_x+u\partial_u,(x^2-u^2)\partial_x+2xu\partial_u$ & $\mathfrak{sl}(2)$ \\
    \hline
    \textbf{III} & $u\partial_x-x\partial_u, (1+x^2-u^2)\partial_x+2xu\partial_u,2xu\partial_x+(1-x^2+u^2)\partial_u$ & $\mathfrak{so}(3)$ \\
    \hline
    \textbf{IV} & $\partial_x,\partial_u,x\partial_x+u\partial_u,u\partial_x-x\partial_u$ & $\mathbb{R}^2\ltimes\mathbb{R}^2$\\
    \hline
    \textbf{V} & $\partial_x,\partial_u,x\partial_x-u\partial_u,u\partial_x,x\partial_u$ & $\mathfrak{sa}(2)$ \\
    \hline
    \textbf{VI} & $\partial_x,\partial_u,x\partial_x,u\partial_u,u\partial_x,x\partial_u$  & $\mathfrak{a}(2)$ \\
    \hline
    \textbf{VII} &  $\partial_x,\partial_u, u\partial_x-x\partial_u,x\partial_x+u\partial_u,
    (x^2-u^2)\partial_x+2xu\partial_u,
    2xu\partial_x+(u^2-x^2)\partial_u$ & $\mathfrak{so}(3,1)$ \\
    \hline
    \textbf{VIII} & $\partial_x,\partial_u,x\partial_x,u\partial_u,u\partial_x,x\partial_u,x^2\partial_x+xu\partial_u,xu\partial_x+u^2\partial_u$ & $\mathfrak{sl}(3)$ \\
    \hline
  \end{tabular}
\end{equation}

In this case, since the Lie algebra $\textbf{I}$ is $3$-dimensional,
Proposition \ref{prop.sufficienza} provides sufficient conditions for the
existence of a $1^{\textrm{st}}$ order Lie remarkable ODE associated to $\textbf{I}$,
whereas Proposition \ref{prop.sufficienza.2} for the existence of a $2^{\textrm{nd}}$
order one. A direct computation shows that such ODEs do not exist. In fact, the
matrix of $2$-prolongations of \textbf{I} is
\begin{equation*}
  \mathcal{M}_{\textbf{I}^{(2)}}=\left(
    \begin{array}{cccc}
      1 & 0 & 0 & 0
      \\
      0 & 1 & 0 & 0
      \\
      u+\alpha x & -x+\alpha u & -1-u_x^2 & -u_{xx}(3u_x+\alpha)
    \end{array}
  \right) =
  \left(
    \begin{array}{cc}
      &  0
      \\
      \mathcal{M}_{\textbf{I}^{(1)}} &  0
      \\
      & -u_{xx}(3u_x+\alpha)
    \end{array}
  \right),
\end{equation*}
so that the rank of the submatrix of $\mathcal{M}_{\textbf{I}^{(1)}}$ is
$3$-dimensional everywhere, which implies that there are no $1^{\textrm{st}}$ order Lie
remarkable ODEs associated with Lie algebra $\textbf{I}$. We cannot construct
Lie remarkable equation by means of Proposition \ref{prop.sufficienza.2} as the
necessary conditions contained in Remark \ref{rem.nec.cond.suff.2} are not
satisfied. Anyway, for our purposes, we write the only solution to system
$\mathcal{M}_{\textbf{I}^{(2)}}(F)=0$, up to functional dependency:
\begin{equation}\label{eq.diff.inv.1}
  F(u_x,u_{xx}):=\frac{e^{-\alpha\arctan(u_x)}u_{xx}}{(1+u_x^2)^{\frac{3}{2}}}\,,
  \,\,\alpha\in\mathbb{R},
\end{equation}
Proposition \ref{crit} is not satisfied. Indeed $F(u_x,u_{xx})$ is a differential
invariant of Lie algebra \textbf{I} and the most general $2^{\textrm{nd}}$ order scalar
ODEs admitting \textbf{I} as subalgebra of symmetries is $F(u_x,u_{xx})=k$,
$k\in\mathbb{R}$, but previous equations are not all equivalent each other as
for $k=0$ we have an equation belonging to the class of projective connections,
(i.e. equations $u_{xx}=P(x,u,u_x)$ where $P$ is a polynomial of third degree
in $u_x$), which is closed w.r.t. point transformations, whereas for $k\neq 0$
the equation does not belong to this class.  Now we answered a question posed
in Remark \ref{rem.controesempio}. To this aim, note that equations of the
family $D_x(F)=0$, which coincides with
$(1+u_x^2)u_{xxx}=u_{xx}^2(\alpha+3u_x)$, are not all equivalent each other
since for $\alpha=0$ we obtain a Lie remarkable equation with a $6$-dimensional
symmetry algebra (see also below, when discussing the case of Lie algebra
\textbf{VII}) whereas for $\alpha\neq 0$ we do not have any of these
properties. Taking into account that the only solution to system
$\mathcal{M}_{\textbf{I}^{(2)}}(F)=0$ is \eqref{eq.diff.inv.1}, we answered
negatively to the question posed in Remark \ref{rem.controesempio}.

Note that $ds={e^{\alpha\arctan(u_x)}}{(1+u_x^2)^{\frac{1}{2}}}dx$ is an
invariant $1$-form; a function of $F(u_x,u_{xx})$ and of its derivatives with
respect to $ds$ is, in its turn, a differential invariant. For $\alpha=0$ we
find the classical result of the euclidean geometry that any (euclidean)
differential invariant is a function of the curvature and of its
derivatives. Similar results can be obtained also for the other Lie algebras:
differential invariants of degree $0$ and $1$ play the same role as the
curvature and the arc-length element for the Klein geometries associated to
algebras \eqref{eq.primitive}. We do not insist here on this aspect even if it
is somehow related to the topic of this paper.

By using the same reasonings we adopted for the algebra \textbf{I}, we can
prove that there are no Lie remarkable equations associated with Lie algebras
\textbf{II} and \textbf{III}. In fact, according to Proposition
\ref{prop.sufficienza}, for dimensional reasons such equations should be of
$1^{\textrm{st}}$ order, but there are no hypersurfaces of $J^1(1,1)$ of order $1$
(i.e. $1^{\textrm{st}}$ order ODEs) where the rank of $\mathcal{M}_{\textbf{II}^{(1)}}$
(resp. $\mathcal{M}_{\textbf{III}^{(1)}}$) drops.  Both Lie algebras
\textbf{II} and \textbf{III} do not satisfy the hypotheses of Proposition
\ref{prop.sufficienza.2} as they do not satisfy necessary conditions of Remark
\ref{rem.nec.cond.suff.2}. In fact, the rank of both
$\mathcal{M}_{\textbf{II}^{(1)}}$ and $\mathcal{M}_{\textbf{III}^{(1)}}$ is
almost everywhere equal to $3$. For dimensional reason, this is the only case
to be considered. This concludes the proof of item \ref{eq.no.Lie.rem.scalar}
of Theorem \ref{theo.main}. Moreover, both Lie algebras \textbf{II} and
\textbf{III} do not satisfy Proposition \ref{crit}. Indeed, the unique $2^{\textrm{nd}}$
order differential invariant of \textbf{II} (resp. \textbf{III}) up to
functional dependence, is $(1+u_x^2)^{-\frac{3}{2}}(u_{xx}u+1+u_x^2)$
(resp. $(1+u_x^2)^{-\frac{3}{2}}(1+x^2+u^2)u_{xx}+2(1+u_x^2)^{-\frac{1}{2}}(u-xu_x)$)
and our assertion follows by applying the same reasoning as for the Lie algebra
\textbf{I}.

By continuing in our analysis, we see that the only Lie remarkable equation
associated with algebras \textbf{IV} or \textbf{V} is $u_{xx}=0$. In fact
\begin{equation}\label{eq.matrix.IV}
  \mathcal{M}_{\textbf{IV}^{(2)}}=\left(
    \begin{array}{cccc}
      1 & 0 & 0 & 0
      \\
      0 & 1 & 0 & 0
      \\
      x & u & 0 & -u_{xx}
      \\
      u & -x & -1-u_x^2 & -3u_xu_{xx}
    \end{array}
  \right)
\end{equation}
and the only hypersurface of $J^2(1,1)$ where the rank of \eqref{eq.matrix.IV}
is less than $4$ is $u_{xx}=0$. For dimensional reasons, according to
Proposition \ref{prop.sufficienza.2}, Lie algebra $\textbf{IV}$ could produce a
Lie remarkable ODE of order at most $3$, but necessary conditions contained in
Remark \ref{rem.nec.cond.suff.2} are not satisfied. In fact, in view of
\eqref{eq.matrix.IV}, the rank of $\mathcal{M}_{\textbf{IV}^{(r)}}$ is, almost
everywhere, equal to $r+2$ for $r\leq 2$. A similar reasoning applies also to
Lie algebra \textbf{V}. Furthermore, both Lie algebras \textbf{IV} and
\textbf{V} do not satisfy Proposition \ref{crit}. Indeed the unique $3^{\textrm{rd}}$
(resp. $4^{\textrm{th}}$) order differential invariant of \textbf{IV} (resp. \textbf{V})
up to functional dependence, is
$\mathcal{I}=u_{xx}^{-2}((1+u_x^2)u_{xxx}-3u_xu_{xx}^2)$
(resp. $\mathcal{J}=u_{xx}^{-\frac{8}{3}}(3u_{xx}u_{xxxx}-5u_{xxx}^2)$). Now,
equations $\mathcal{I}=k$ with $k\in\mathbb{R}$ cannot be all equivalent each
other as for $k=0$ we have we have a $6$-dimensional Lie symmetry algebra
whereas for $k\neq 0$ a $5$-dimensional one. Also equation $\mathcal{J}=k$ with
$k\in\mathbb{R}$ cannot be all equivalent as the prolongation of a point
transformation is always and affine transformation in all derivatives except
for the first ones in relation to which it is a projective transformation, so
that it is not possible to transform equation $\mathcal{J}=0$ into equation
$\mathcal{J}=k$ for any $k\neq 0$. Since the full symmetry algebra of
$u_{xx}=0$ is the projective algebra $\mathfrak{sl}(3)$, any subalgebra of
$\mathfrak{sl}(3)$ containing either the Lie algebra \textbf{IV} or \textbf{V}
leads to the Lie remarkable equation $u_{xx}=0$ (see also Proposition
\ref{prop.chissa}). For instance, starting from affine algebra \textbf{VI} we
obtain again the Lie remarkable equation $u_{xx}=0$.

By considering the Lie algebras \textbf{VI}, \textbf{VII} and \textbf{VIII} we
can construct, by means of Proposition \ref{prop.sufficienza} and following the
same reasoning adopted so far, respectively equation of items \ref{eq.GKS},
\ref{eq.circle.2d} and \ref{eq.conic} of Theorem \ref{theo.main}.  We note that
none of these Lie algebra satisfies necessary conditions of Remark
\ref{rem.nec.cond.suff.2}. Moreover, also Proposition \ref{crit} cannot apply: it
is sufficient to look at the differential invariants of such Lie algebras (for
instance one can consult the list provided in \cite{Olver2}) and arguing as in
the case of Lie algebra \textbf{V}.

\smallskip This concludes the proof of Theorem \ref{theo.main}.

\subsection{Lie remarkable systems of ODEs: proof of Theorem
  \ref{thm:systems}}\label{sec.systems}

Here we consider the isometry $\cI(\R^{3})$, affine $\cA(\R^{3})$, conformal
$\cC(\R^{3})$ and projective $\cP(\R^{3})$ Lie algebra of $\R^{3}$. For the
convenience of the reader we recall that
\[
\dim \cI(\R^{3})=6,\quad \dim\cA(\R^{3})=12,\quad \dim\cC(\R^{3})=10,\quad
\dim\cP(\R^{3})=15.
\]

In the case of the Lie algebra of infinitesimal isometries $\cI(\R^{1+m})$, we
prove that, for any dimension, system $\{u_{xx}^k=0\,,\,\,k=1,\dots, m\}$ is
associated with $\cI(\R^{1+m})$. This leads to some computations involving the
study of a system of PDEs, which can be solved by using some tricks (see next
section). In general, similar systems are very hard to treat, so that the
algorithmic procedure described in the beginning of Section \ref{sec.proof}
provides as efficient tools for describing Lie remarkable systems associated
with a given Lie algebra of vector fields. We follow the latter methods in
Section \ref{sec.aff.as.Lie.sym}, \ref{sec.sc.as.Lie.sym} and
\ref{sec.proj.as.Lie.sym}.

\subsubsection{Systems of ODEs associated with Lie algebra
  $\cI(\R^{1+m})$}\label{sec.iso.as.Lie.symm}

Generators of infinitesimal isometries of $\mathbb{R}^{1+m}$ are
\begin{equation}\label{eq.gen.iso}
  \pd{}{x}, \quad \pd{}{u^i}, \quad -u^i\pd{}{x} +x\pd{}{u^i}\,,
  \quad u^i\pd{}{u^j}-u^j\pd{}{u^i}\,,\quad i\neq j\,,\quad i,j=1\dots m.
\end{equation}

There are no $1^{\textrm{st}}$ order systems ODEs admitting \eqref{eq.gen.iso} as Lie
symmetry subalgebra. Indeed, the most general system of $1^{\textrm{st}}$ order ODEs
admitting $\partial_x$ and $\partial_{u^i}$ as Lie symmetries is of the form
$\{u^k_x=c^k\}_{k=1\dots m}$, $c^k\in\mathbb{R}$. By imposing that the third
set of vector fields \eqref{eq.gen.iso} are symmetries of the above system, we
obtain the equation $\delta^{ik}+c^ic^k=0\,\,\forall\,i,k$, so that if $i=k$ we
obtain the contradiction $({c^k})^2+1=0$.

\smallskip Going to higher order systems, the most general system of
$2^{\textrm{nd}}$ order ODEs admitting $\partial_x$ and $\partial_{u^i}$ as Lie symmetries
is of the form
\begin{equation}\label{eq.sys.iso}
  \{u^k_{xx}=F^k(u^1_x,\dots,u^m_x)\}\,, \quad k=1\dots m.
\end{equation}
By imposing that the remaining vector fields of \eqref{eq.gen.iso} are
symmetries of \eqref{eq.sys.iso}, we obtain the following system of PDEs:
\begin{equation}\label{eq.sys.iso.2}
  \left\{
    \begin{array}{ll}
      -F^k_{u^i_x}-\sum_{h=1}^m u^i_x u^h
      _x F^k_{u^j_x} + 2F^k u^i_x + F^iu^k_x=0, &
      \\
      & \qquad \forall\, i,j,k.
      \\
      -u^i_x F^j_{u^j_x} + u^j_x F^k_{u^i_x} + F^i\delta^{kj} - F^j\delta^{ki}=0, &
    \end{array}
  \right.
\end{equation}

By integrating the system formed by equations of the first line of
\eqref{eq.sys.iso.2} with $k=i$, we obtain that
\begin{equation}\label{eq.iso.more.gen}
  F^k=c^k\left(1+\sum_{j=1}^m(u^j_x)^2\right)^{\frac{3}{2}}.
\end{equation}

Let us observe that in the case $m=1$ the second set of equations of
\eqref{eq.sys.iso.2} is always satisfied, so that \eqref{eq.iso.more.gen}
reduces to \eqref{eq.diff.inv.1} with $\alpha=0$, which turns out the most
general scalar ODE (euclidean curvature equal to a constant) admitting the
euclidean algebra of $\mathbb{R}^2$ as a Lie symmetry subalgebra.

In the case $m\geq 2$, by substituting \eqref{eq.iso.more.gen} into the second
set of equations of \eqref{eq.sys.iso.2} with $k=i\neq j$ gives $c^j=0$ for
$j\neq k$. In view of arbitrariness of $k$, we obtain $c^k=0$ $\forall\, k$,
that is system of item \ref{eq.lines.nd} of Theorem \ref{thm:systems}. This
proves item \ref{eq.lines.nd} of Theorem~\ref{thm:systems}.

\subsubsection{Systems of ODEs associated with Lie algebra $\cA(\R^{3})$}
\label{sec.aff.as.Lie.sym}

The generators of this algebra are
\[
\pd{}{a}, \quad a\pd{}{b},
\]
for all $a, b\in \{x,u^i\}$. We can prove, by using the reasonings contained in
the beginning of Section \ref{sec.proof}, that when $r=5$ there exists the
following Lie remarkable equation ($u$ and $v$ denote the dependent variables):
\begin{align*}
  u_{xxxxx}=&(5( - u_{xx}^4v_{xxxx}^3 + 6u_{xx}^3u_{xxx}v_{xxx}v_{xxxx}^2 + 3
  u_{xx}^3u_{xxxx}v_{xx}v_{xxxx}^2 \\ & + 36u_{xx}^3u_{xxxx}v_{xxx}^2v_{xxxx} -
  6u_{xx}^2 u_{xxx}^2v_{xx}v_{xxxx}^2 - 72u_{xx}^2u_{xxx}^2v_{xxx}^2v_{xxxx} -
  \\ & 84u_{xx}^2u_{xxx}u_{xxxx}v_{xx}v_{xxx}v_{xxxx} +
  72u_{xx}^2u_{xxx}u_{xxxx}v_{xxx}^3 - 3u_{xx}^2u_{xxxx}^2v_{xx}^2v_{xxxx} \\ &
  - 36u_{xx}^2u_{xxxx}^2v_{xx}v_{xxx}^2 +
  144u_{xx}u_{xxx}^3v_{xx}v_{xxx}v_{xxxx} +
  48u_{xx}u_{xxx}^2u_{xxxx}v_{xx}^2v_{xxxx} \\ & -
  144u_{xx}u_{xxx}^2u_{xxxx}v_{xx}v_{xxx}^2 +
  78u_{xx}u_{xxx}u_{xxxx}^2v_{xx}^2v_{xxx} + u_{xx}u_{xxxx}^3v_{xx}^3 \\ & -
  72u_{xxx}^4v_{xx}^2v_{xxxx} + 72u_{xxx}^3u_{xxxx}v_{xx}^2v_{xxx} - 42
  u_{xxx}^2u_{xxxx}^2v_{xx}^3)) \\ & \big/ (144(u_{xx}^3v_{xxx}^3 -
  3u_{xx}^2u_{xxx}v_{xx}v_{xxx} ^2 + 3u_{xx}u_{xxx}^2v_{xx}^2v_{xxx} -
  u_{xxx}^3v_{xx}^3)),
  \\
  v_{xxxxx}=&(5( - u_{xx}^3v_{xx}v_{xxxx}^3 + 42u_{xx}^3v_{xxx}^2v_{xxxx}^2 -
  78u_{xx} ^2u_{xxx}v_{xx}v_{xxx}v_{xxxx}^2 \\ & -
  72u_{xx}^2u_{xxx}v_{xxx}^3v_{xxxx} + 3u_{xx}^2u_{xxxx}v_{xx}^2v_{xxxx}^2 -
  48u_{xx}^2u_{xxxx}v_{xx}v_{xxx}^2v_{xxxx} \\ & + 72u_{xx}^2u_{xxxx}v_{xxx}^4
  + 36u_{xx}u_{xxx}^2v_{xx}^2v_{xxxx}^2 + 144u_{xx}u_{xxx}^2v_{xx}
  v_{xxx}^2v_{xxxx} \\ & + 84u_{xx}u_{xxx}u_{xxxx}v_{xx}^2v_{xxx}v_{xxxx} -
  144u_{xx}u_{xxx} u_{xxxx}v_{xx}v_{xxx}^3 - 3u_{xx}u_{xxxx}^2v_{xx}^3v_{xxxx}
  \\ & + 6u_{xx}u_{xxxx}^2v_{xx}^2v_{xxx}^2 -
  72u_{xxx}^3v_{xx}^2v_{xxx}v_{xxxx} - 36u_{xxx}^2u_{xxxx}v_{xx}^3v_{xxxx} \\ &
  + 72u_{xxx}^2u_{xxxx}v_{xx}^2v_{xxx}^2 - 6u_{xxx}u_{xxxx}^2v_{xx}^3v_{xxx} +
  u_{xxxx}^3v_{xx}^4)) \\ & \big /(144(u_{xx}^3v_{xxx}^3 -
  3u_{xx}^2u_{xxx}v_{xx}v_{xxx}^2 + 3 u_{xx}u_{xxx}^2v_{xx}^2v_{xxx} -
  u_{xxx}^3v_{xx}^3)).
\end{align*}
A direct computation proves the following property.
\begin{proposition}
  The expressions that define the above two equations are differential
  invariants of the affine Lie algebra.
\end{proposition}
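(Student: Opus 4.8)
The claim is that the two enormous rational expressions defining $u_{xxxxx}$ and $v_{xxxxx}$ are differential invariants of the affine Lie algebra $\cA(\R^3)$ acting on $J^5(1,2)$. So I need to verify that each expression $\Phi$ (the right-hand side of each equation) satisfies $X^{(5)}(\Phi)=0$ for every generator $X$ of $\cA(\R^3)$.

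Let me think about the structure here. The affine algebra has generators $\partial_a$ and $a\partial_b$ for $a,b \in \{x,u,v\}$. Some of these are trivial to check — the translations $\partial_x, \partial_u, \partial_v$ annihilate anything that depends only on the higher derivatives $u_{xx}, u_{xxx}, u_{xxxx}, v_{xx}, v_{xxx}, v_{xxxx}$ (and not on $x,u,v,u_x,v_x$), which is the case for these expressions. So the real content is in the prolongations of the linear vector fields $a\partial_b$.

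Now these expressions are invariants — I need a method, not brute force. The key observation is homogeneity. Each prolonged generator $X^{(5)}$ acts on the derivative coordinates, and the expressions are ratios of homogeneous polynomials in the derivatives. The pattern $u_{xx}^3v_{xxx}^3 - 3u_{xx}^2u_{xxx}v_{xx}v_{xxx}^2 + \cdots$ in the denominator is exactly $(u_{xx}v_{xxx}-u_{xxx}v_{xx})^3$, the cube of a $2\times 2$ Wronskian-type determinant $W := u_{xx}v_{xxx}-u_{xxx}v_{xx}$. This is a well-known relative invariant under the affine group. So the plan is to identify the fundamental relative invariants (like $W$ and related determinantal expressions built from the derivative jets), establish their transformation weights under each generator $X^{(5)}$, and then show the numerator and denominator of each $\Phi$ transform with the same weight so the ratio is absolutely invariant.

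**The plan, concretely.** First I would fix the six generators that genuinely act (the linear ones $x\partial_x, x\partial_u, x\partial_v, u\partial_x, u\partial_v, v\partial_x, u\partial_u, v\partial_v, v\partial_u$), prolong each to $J^5(1,2)$ using the recursion $X^j_{\tau,\lambda}=D_\lambda(X^j_\tau)-u^j_{\tau,\mu}D_\lambda(X^\mu)$ from \eqref{eq:1}, and record the action on $u_{xx},\dots,v_{xxxx}$. The scaling generators $u\partial_u+v\partial_v$ and $x\partial_x$ give weights, and the off-diagonal ones like $u\partial_x$ produce the nontrivial shearing terms. Second, I would introduce the determinantal building block $W$ and its higher analogues and compute $X^{(5)}(W)=\mu_X\, W$ for an explicit weight $\mu_X$ depending only on the generator, and similarly for the numerator polynomials. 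Third, I would check that for the $u$-equation the numerator $N_u$ satisfies $X^{(5)}(N_u)=(\mu_X + \nu_X)\,N_u$ where $\nu_X$ matches the extra weight carried by $u_{xxxxx}$ itself, so that $u_{xxxxx}-\Phi_u$ transforms homogeneously and hence the invariant vanishing locus is preserved; concretely, that $X^{(5)}(\Phi_u)=0$ because numerator and denominator weights cancel.

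**The main obstacle.** The hard part will be the bookkeeping for the shear generators such as $u\partial_x$ and $v\partial_x$: their prolongations introduce corrections that mix $u$- and $v$-derivatives of all orders up to five, so the numerator polynomials are \emph{not} individually relative invariants under these — only the full ratio is. I expect that checking these generators cannot be reduced cleanly to a weight computation and will instead require verifying a genuine cancellation in the Lie derivative of the ratio. The cleanest route is probably to compute $X^{(5)}(\text{numerator})\cdot\text{denominator}-\text{numerator}\cdot X^{(5)}(\text{denominator})$ and show it vanishes identically as a polynomial in the jet coordinates — which is exactly the ``direct computation'' the paper alludes to, best carried out with the computer algebra system \cite{Relie} already employed throughout. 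The conceptual content is that these expressions are the components of the second covariant derivative of the curve $(u(x),v(x))$ relative to the affine connection, pushed to the order where the affine-flat condition $u_{xx}=v_{xx}=0$ generalizes; the verification itself is mechanical but the identification of $W^3$ as the denominator is what makes the weight-matching transparent.
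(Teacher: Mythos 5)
Your structural observations are genuinely useful --- the translations do act trivially, the denominator is indeed $(u_{xx}v_{xxx}-u_{xxx}v_{xx})^3$, and a homogeneity count does organize the diagonal generators --- but your headline criterion, that each right-hand side $\Phi$ satisfies $X^{(5)}(\Phi)=0$ for every generator, is false, and your own step~3 already shows it. Under $x\partial_x$ every monomial of the numerator $N_u$ has total derivative-order $20$ while every monomial of the denominator has order $15$, so $\Phi_u$ is a \emph{relative} invariant of weight $-5$ (precisely the weight of $u_{xxxxx}$), not an absolute one; similarly, under $u\partial_u$ the numerator has $u$-degree $4$ against the denominator's $3$, giving weight $+1$. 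So your final test $X^{(5)}(N)\,D-N\,X^{(5)}(D)=0$ would come out nonzero already for the scaling generators, and the attempt stalls exactly where you placed the burden of proof; indeed your step~3 (ratio carries the weight $\nu_X$ of $u_{xxxxx}$) contradicts your opening claim ($\nu_X=0$). There is also an obstruction that no weight bookkeeping can remove: under the shear $v\partial_u$ the prolongation sends $u_{(k)}$ to $u_{(k)}+t\,v_{(k)}$, so $X^{(5)}(u_{xxxxx}-\Phi_u)$ necessarily involves $v_{xxxxx}$, and the two defining expressions $E_1=u_{xxxxx}-\Phi_u$, $E_2=v_{xxxxx}-\Phi_v$ mix; neither the ratios nor the $E_i$ individually can be eigenfunctions of all twelve generators.

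In fact absolute invariance of both ratios is impossible for dimensional reasons: two functionally independent absolute invariants on $J^5(1,2)$ (dimension $13$) would force the generic orbits of $\cA(\R^3)^{(5)}$ to have dimension at most $11$, contradicting the rank-$12$ condition (Proposition \ref{prop.sufficienza}) on which the Lie remarkability of this very system rests. The invariance asserted in the proposition must therefore be read in the looser, relative sense the paper uses in the scalar affine case (item \ref{eq.GKS} of Theorem \ref{theo.main}), where the ``differential invariant'' whose vanishing gives $3u_{xx}u_{xxxx}-5u_{xxx}^2=0$ is only a relative invariant, the absolute one being $u_{xx}^{-8/3}(3u_{xx}u_{xxxx}-5u_{xxx}^2)$. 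The computation that actually needs doing --- and the one the paper's one-line ``direct computation'' (via ReLie) performs --- is to verify $X^{(5)}(E_i)=\sum_j\lambda_{ij}\,E_j$ for smooth coefficients $\lambda_{ij}$, i.e.\ that the prolonged fields preserve the ideal generated by $E_1,E_2$ (equivalently, are tangent to the system). Your computer-algebra strategy would succeed once you replace ``show the Lie derivative of each ratio vanishes identically'' by ``reduce each Lie derivative modulo $E_1,E_2$ and show the remainder vanishes''; as written, the criterion you set yourself disproves rather than proves the claim.
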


As it is well-known \cite{Olver2} the number of independent differential
invariants for space curves is $2$. In this sense the above system generalizes
the result in item \ref{eq.GKS} of Theorem \ref{theo.main}, where the affine
algebra uniquely determines the vanishing of the only differential invariant of
plane curves. It is reasonable to conjecture that an analogous result could hold in higher
dimensional euclidean spaces.

\subsubsection{Systems of ODEs associated with Lie algebra
  $\cC(\R^{3})$}\label{sec.sc.as.Lie.sym}

Conformal vector fields of the euclidean space $\mathbb{R}^n$ form a Lie
algebra of dimension $\frac{1}{2}(n+1)(n+2)$ except for the case $n=2$. In
fact, in such a case, the dimension of the Lie algebra of conformal vector
fields is infinite, as the latter are essentially identified with holomorphic
functions of one complex variable. More precisely, if $X_1+iX_2$ is a
holomorphic function, then $X_1\partial_x+X_2\partial_y$ is a conformal vector
field of the euclidean metric. If $n\geq 3$, the conformal algebra is formed by
translations, rotations, the dilatation and special conformal vector
fields. The local flow of such vector fields consists of special conformal
transformations: such a transformation can be understood as an inversion
followed by a translation and again followed by an inversion.  Since it is well
known that there are no systems of ODEs admitting an infinite dimensional Lie
algebra of point symmetries (provided that such algebra has no zero-dimensional
orbits), for the case $n=2$, instead of considering the full conformal Lie
algebra, we shall consider the Lie subalgebra formed by the $2$ translations,
the rotation, the dilation and the $2$ special conformal vector
fields. Let us introduce the notation $y^0=x$ and $y^i=u^i$. Then, in
any dimension, special conformal vector field are given by
\begin{equation*}
  \Xi_j=(2(y^j)^2-\|y\|^2)\partial_{y^j} + 2\sum_{i\neq j}y^jy^i\partial_{y^i},
\end{equation*}
and their local flow is
\begin{equation*} {y^i_0}'=\frac{y^i_0-\delta^i_j t
    \|y_0\|^2}{1-2ty^j_0+t^2\|y_0\|^2}.
\end{equation*}
Special conformal transformations preserve circles of $\mathbb{R}^n$. By
adopting the method described in the beginning of Section \ref{sec.proof}, we
proved by a straightforward computation that the only system of ODEs admitting
 $\cC(\mathbb{R}^3)$ as Lie symmetry subalgebra is
system of item \ref{eq.circle.nd} of Theorem \ref{thm:systems}.

We observe that in a recent paper \cite{CDT} a system of ODEs is
  associated with the conformal symmetry algebra of an anti-self-dual conformal
  structure defined by a pseudo-Riemannian metric which is a solution of the
  Plebanski equation. If the dimension of the conformal symmetry algebra is
  high enough, the associated system of ODEs is Lie remarkable (like in
  Example~1 of \cite{CDT}), and can be determined by our methods, as an alternative to the
  method proposed in \cite{CDT}.

\subsubsection{Systems od ODEs associated with Lie algebra
  $\cP(\R^{3})$}\label{sec.proj.as.Lie.sym}

The generators of Lie algebra $\cP(\R^{3})$ are
\[
\pd{}{a}, \quad a\pd{}{b}, \quad a\left(x\pd{}{x} + \sum_{i=1}^m
  u^i\pd{}{u^i}\right),
\]
for all $a,b\in\{x,u^i\}$. We can prove, by using the reasonings contained in
the beginning of Section \ref{sec.proof}, that when $r=6$ there exists the
following Lie remarkable equation ($u$ and $v$ denote the dependent variables):
\begin{align*}
  u_{xxxxxx}=&( - 144u_{xx}^5v_{xxx}^2v_{xxxxx}^2 + 360u_{xx}^5v_{xxx}v_{xxxx}
  ^2v_{xxxxx} - 225u_{xx}^5v_{xxxx}^4 +
  \\
  & 288u_{xx}^4u_{xxx}v_{xx}v_{xxx}v_{xxxxx}^2 -
  360u_{xx}^4u_{xxx}v_{xx}v_{xxxx}^2v_{xxxxx}
  \\
  & - 480u_{xx}^4u_{xxx}v_{xxx}^2v_{xxxx} v_{xxxxx} +
  600u_{xx}^4u_{xxx}v_{xxx}v_{xxxx}^3
  \\
  &- 720u_{xx}^4u_{xxxx}v_{xx}v_{xxx} v_{xxxx}v_{xxxxx} +
  900u_{xx}^4u_{xxxx}v_{xx}v_{xxxx}^3 \\ &+960u_{xx}^4u_{xxxx}v_{xxx}^3
  v_{xxxxx} - 1200u_{xx}^4u_{xxxx}v_{xxx}^2v_{xxxx}^2 \\ & +
  288u_{xx}^4u_{xxxxx}v_{xx} v_{xxx}^2v_{xxxxx} -
  360u_{xx}^4u_{xxxxx}v_{xx}v_{xxx}v_{xxxx}^2 \\ &+ 240u_{xx}^4
  u_{xxxxx}v_{xxx}^3v_{xxxx} - 144u_{xx}^3u_{xxx}^2v_{xx}^2v_{xxxxx}^2 \\
  &+ 960u_{xx}^3 u_{xxx}^2v_{xx}v_{xxx}v_{xxxx}v_{xxxxx} -
  600u_{xx}^3u_{xxx}^2v_{xx}v_{xxxx}^3 \\ & - 640
  u_{xx}^3u_{xxx}^2v_{xxx}^3v_{xxxxx} + 400u_{xx}^3u_{xxx}^2v_{xxx}^2v_{xxxx}^2
  \\ &+ 720u_{xx}^3u_{xxx}u_{xxxx}v_{xx}^2v_{xxxx}v_{xxxxx} -
  2400u_{xx}^3u_{xxx}u_{xxxx}v_{xx} v_{xxx}^2v_{xxxxx} \\ &+
  600u_{xx}^3u_{xxx}u_{xxxx}v_{xx}v_{xxx}v_{xxxx}^2 + 400u_{xx}^3
  u_{xxx}u_{xxxx}v_{xxx}^3v_{xxxx} \\ &-
  576u_{xx}^3u_{xxx}u_{xxxxx}v_{xx}^2v_{xxx}v_{xxxxx} +
  360u_{xx}^3u_{xxx}u_{xxxxx}v_{xx}^2v_{xxxx}^2 \\ &-
  240u_{xx}^3u_{xxx}u_{xxxxx}v_{xx} v_{xxx}^2v_{xxxx} +
  640u_{xx}^3u_{xxx}u_{xxxxx}v_{xxx}^4 \\ &+ 360u_{xx}^3u_{xxxx}^2
  v_{xx}^2v_{xxx}v_{xxxxx} - 1350u_{xx}^3u_{xxxx}^2v_{xx}^2v_{xxxx}^2 \\ & +
  2400u_{xx}^3 u_{xxxx}^2v_{xx}v_{xxx}^2v_{xxxx} -
  800u_{xx}^3u_{xxxx}^2v_{xxx}^4 \\ &+ 720u_{xx}^3
  u_{xxxx}u_{xxxxx}v_{xx}^2v_{xxx}v_{xxxx} -
  1200u_{xx}^3u_{xxxx}u_{xxxxx}v_{xx}v_{xxx}^3 \\ & -
  144u_{xx}^3u_{xxxxx}^2v_{xx}^2v_{xxx}^2 -
  480u_{xx}^2u_{xxx}^3v_{xx}^2v_{xxxx} v_{xxxxx} \\ & +
  1920u_{xx}^2u_{xxx}^3v_{xx}v_{xxx}^2v_{xxxxx} -
  800u_{xx}^2u_{xxx}^3v_{xx}v_{xxx}v_{xxxx}^2 \\ & +
  1920u_{xx}^2u_{xxx}^2u_{xxxx}v_{xx}^2v_{xxx}v_{xxxxx} +
  600u_{xx}^2u_{xxx}^2u_{xxxx}v_{xx}^2v_{xxxx}^2 \\ & -
  2000u_{xx}^2u_{xxx}^2u_{xxxx}v_{xx} v_{xxx}^2v_{xxxx} +
  288u_{xx}^2u_{xxx}^2u_{xxxxx}v_{xx}^3v_{xxxxx} \\ & - 240u_{xx}^2
  u_{xxx}^2u_{xxxxx}v_{xx}^2v_{xxx}v_{xxxx} -
  1920u_{xx}^2u_{xxx}^2u_{xxxxx}v_{xx}v_{xxx} ^3 \\ & -
  360u_{xx}^2u_{xxx}u_{xxxx}^2v_{xx}^3v_{xxxxx} - 3000u_{xx}^2u_{xxx}u_{xxxx}^2
  v_{xx}^2v_{xxx}v_{xxxx} \\ & + 2800u_{xx}^2u_{xxx}u_{xxxx}^2v_{xx}v_{xxx}^3 -
  720u_{xx}^2u_{xxx}u_{xxxx}u_{xxxxx}v_{xx}^3v_{xxxx} \\ & +
  3120u_{xx}^2u_{xxx}u_{xxxx}u_{xxxxx}v_{xx}^2 v_{xxx}^2 +
  288u_{xx}^2u_{xxx}u_{xxxxx}^2v_{xx}^3v_{xxx} \\ & + 900u_{xx}^2u_{xxxx}^3
  v_{xx}^3v_{xxxx} - 1200u_{xx}^2u_{xxxx}^3v_{xx}^2v_{xxx}^2 \\ &-
  360u_{xx}^2u_{xxxx}^2 u_{xxxxx}v_{xx}^3v_{xxx} -
  1920u_{xx}u_{xxx}^4v_{xx}^2v_{xxx}v_{xxxxx} \\ & + 400u_{xx}
  u_{xxx}^4v_{xx}^2v_{xxxx}^2 - 480u_{xx}u_{xxx}^3u_{xxxx}v_{xx}^3v_{xxxxx} \\
  & + 2800 u_{xx}u_{xxx}^3u_{xxxx}v_{xx}^2v_{xxx}v_{xxxx} +
  240u_{xx}u_{xxx}^3u_{xxxxx}v_{xx}^3 v_{xxxx} \\ & +
  1920u_{xx}u_{xxx}^3u_{xxxxx}v_{xx}^2v_{xxx}^2 + 600u_{xx}u_{xxx}^2u_{xxxx}
  ^2v_{xx}^3v_{xxxx} \\ & - 3200u_{xx}u_{xxx}^2u_{xxxx}^2v_{xx}^2v_{xxx}^2 -
  2640u_{xx} u_{xxx}^2u_{xxxx}u_{xxxxx}v_{xx}^3v_{xxx} \\ &-
  144u_{xx}u_{xxx}^2u_{xxxxx}^2v_{xx}^4 +
  1800u_{xx}u_{xxx}u_{xxxx}^3v_{xx}^3v_{xxx} \\ & +
  360u_{xx}u_{xxx}u_{xxxx}^2u_{xxxxx}v_{xx} ^4 - 225u_{xx}u_{xxxx}^4v_{xx}^4 +
  640u_{xxx}^5v_{xx}^3v_{xxxxx} \\ & - 1200u_{xxx}^4 u_{xxxx}v_{xx}^3v_{xxxx} -
  640u_{xxx}^4u_{xxxxx}v_{xx}^3v_{xxx} + 1200u_{xxx}^3u_{xxxx}^2v_{xx}^3v_{xxx}
  \\ & + 720u_{xxx}^3u_{xxxx}u_{xxxxx}v_{xx}^4 - 600u_{xxx}^2
  u_{xxxx}^3v_{xx}^4) \big / (160u_{xx}^4v_{xxx}^4 - \\ &
  640u_{xx}^3u_{xxx}v_{xx}v_{xxx}^3 + 960 u_{xx}^2u_{xxx}^2v_{xx}^2v_{xxx}^2 -
  640u_{xx}u_{xxx}^3v_{xx}^3v_{xxx} + 160u_{xxx} ^4v_{xx}^4),
  \\
  v_{xxxxxx}=&( - 144u_{xx}^4v_{xx}v_{xxx}^2v_{xxxxx}^2 +
  360u_{xx}^4v_{xx}v_{xxx} v_{xxxx}^2v_{xxxxx} \\ & -
  225u_{xx}^4v_{xx}v_{xxxx}^4 + 720u_{xx}^4v_{xxx}^3v_{xxxx} v_{xxxxx} \\ & -
  600u_{xx}^4v_{xxx}^2v_{xxxx}^3 + 288u_{xx}^3u_{xxx}v_{xx}^2v_{xxx}
  v_{xxxxx}^2 \\ & - 360u_{xx}^3u_{xxx}v_{xx}^2v_{xxxx}^2v_{xxxxx} -
  2640u_{xx}^3u_{xxx}v_{xx}v_{xxx}^2v_{xxxx}v_{xxxxx} \\ & +
  1800u_{xx}^3u_{xxx}v_{xx}v_{xxx}v_{xxxx}^3 - 640
  u_{xx}^3u_{xxx}v_{xxx}^4v_{xxxxx} \\ & +
  1200u_{xx}^3u_{xxx}v_{xxx}^3v_{xxxx}^2 - 720
  u_{xx}^3u_{xxxx}v_{xx}^2v_{xxx}v_{xxxx}v_{xxxxx} \\ & +
  900u_{xx}^3u_{xxxx}v_{xx}^2v_{xxxx} ^3 +
  240u_{xx}^3u_{xxxx}v_{xx}v_{xxx}^3v_{xxxxx} \\ & +
  600u_{xx}^3u_{xxxx}v_{xx}v_{xxx}^ 2v_{xxxx}^2 -
  1200u_{xx}^3u_{xxxx}v_{xxx}^4v_{xxxx} \\ & +
  288u_{xx}^3u_{xxxxx}v_{xx}^2v_{xxx}^2v_{xxxxx} -
  360u_{xx}^3u_{xxxxx}v_{xx}^2v_{xxx}v_{xxxx}^2 \\ & - 480u_{xx}^3
  u_{xxxxx}v_{xx}v_{xxx}^3v_{xxxx} + 640u_{xx}^3u_{xxxxx}v_{xxx}^5 \\ & -
  144u_{xx}^2u_{xxx} ^2v_{xx}^3v_{xxxxx}^2 +
  3120u_{xx}^2u_{xxx}^2v_{xx}^2v_{xxx}v_{xxxx}v_{xxxxx} \\ & -
  1200u_{xx}^2u_{xxx}^2v_{xx}^2v_{xxxx}^3 +
  1920u_{xx}^2u_{xxx}^2v_{xx}v_{xxx}^3 v_{xxxxx} \\ & -
  3200u_{xx}^2u_{xxx}^2v_{xx}v_{xxx}^2v_{xxxx}^2 +
  720u_{xx}^2u_{xxx}u_{xxxx}v_{xx}^3v_{xxxx}v_{xxxxx} \\ & -
  240u_{xx}^2u_{xxx}u_{xxxx}v_{xx}^2v_{xxx}^2 v_{xxxxx} -
  3000u_{xx}^2u_{xxx}u_{xxxx}v_{xx}^2v_{xxx}v_{xxxx}^2 \\ & +
  2800u_{xx}^2u_{xxx}u_{xxxx}v_{xx}v_{xxx}^3v_{xxxx} -
  576u_{xx}^2u_{xxx}u_{xxxxx}v_{xx}^3v_{xxx}v_{xxxxx} \\ & +
  360u_{xx}^2u_{xxx}u_{xxxxx}v_{xx}^3v_{xxxx}^2 +
  1920u_{xx}^2u_{xxx}u_{xxxxx}v_{xx}^ 2v_{xxx}^2v_{xxxx} \\ & -
  1920u_{xx}^2u_{xxx}u_{xxxxx}v_{xx}v_{xxx}^4 +
  360u_{xx}^2u_{xxxx}^2v_{xx}^3v_{xxx}v_{xxxxx} \\ & -
  1350u_{xx}^2u_{xxxx}^2v_{xx}^3v_{xxxx}^2 + 600
  u_{xx}^2u_{xxxx}^2v_{xx}^2v_{xxx}^2v_{xxxx} \\ & +
  400u_{xx}^2u_{xxxx}^2v_{xx}v_{xxx}^4 +
  720u_{xx}^2u_{xxxx}u_{xxxxx}v_{xx}^3v_{xxx}v_{xxxx} \\ & -
  480u_{xx}^2u_{xxxx}u_{xxxxx} v_{xx}^2v_{xxx}^3 -
  144u_{xx}^2u_{xxxxx}^2v_{xx}^3v_{xxx}^2 \\ & - 1200u_{xx}u_{xxx}^3
  v_{xx}^3v_{xxxx}v_{xxxxx} - 1920u_{xx}u_{xxx}^3v_{xx}^2v_{xxx}^2v_{xxxxx} \\
  & + 2800u_{xx} u_{xxx}^3v_{xx}^2v_{xxx}v_{xxxx}^2 -
  240u_{xx}u_{xxx}^2u_{xxxx}v_{xx}^3v_{xxx} v_{xxxxx} \\ & +
  2400u_{xx}u_{xxx}^2u_{xxxx}v_{xx}^3v_{xxxx}^2 - 2000u_{xx}u_{xxx}^2
  u_{xxxx}v_{xx}^2v_{xxx}^2v_{xxxx} \\ & +
  288u_{xx}u_{xxx}^2u_{xxxxx}v_{xx}^4v_{xxxxx} -
  2400u_{xx}u_{xxx}^2u_{xxxxx}v_{xx}^3v_{xxx}v_{xxxx} \\ & +
  1920u_{xx}u_{xxx}^2u_{xxxxx} v_{xx}^2v_{xxx}^3 -
  360u_{xx}u_{xxx}u_{xxxx}^2v_{xx}^4v_{xxxxx} \\ & +
  600u_{xx}u_{xxx}u_{xxxx}^2v_{xx}^3v_{xxx}v_{xxxx} -
  800u_{xx}u_{xxx}u_{xxxx}^2v_{xx}^2v_{xxx}^3 \\ & -
  720u_{xx}u_{xxx}u_{xxxx}u_{xxxxx}v_{xx}^4v_{xxxx} +
  960u_{xx}u_{xxx}u_{xxxx}u_{xxxxx}v_{xx}^ 3v_{xxx}^2 \\ & +
  288u_{xx}u_{xxx}u_{xxxxx}^2v_{xx}^4v_{xxx} + 900u_{xx}u_{xxxx}^3v_{xx}^
  4v_{xxxx} \\ & - 600u_{xx}u_{xxxx}^3v_{xx}^3v_{xxx}^2 -
  360u_{xx}u_{xxxx}^2u_{xxxxx}v_{xx} ^4v_{xxx} \\ & +
  640u_{xxx}^4v_{xx}^3v_{xxx}v_{xxxxx} - 800u_{xxx}^4v_{xx}^3v_{xxxx}^2 +
  240u_{xxx}^3u_{xxxx}v_{xx}^4v_{xxxxx} \\ & +
  400u_{xxx}^3u_{xxxx}v_{xx}^3v_{xxx}v_{xxxx} +
  960u_{xxx}^3u_{xxxxx}v_{xx}^4v_{xxxx} -
  640u_{xxx}^3u_{xxxxx}v_{xx}^3v_{xxx}^2 \\ & -
  1200u_{xxx}^2u_{xxxx}^2v_{xx}^4v_{xxxx} +
  400u_{xxx}^2u_{xxxx}^2v_{xx}^3v_{xxx}^ 2 \\ & -
  480u_{xxx}^2u_{xxxx}u_{xxxxx}v_{xx}^4v_{xxx} -
  144u_{xxx}^2u_{xxxxx}^2v_{xx}^5 \\ & + 600u_{xxx}u_{xxxx}^3v_{xx}^4v_{xxx} +
  360u_{xxx}u_{xxxx}^2u_{xxxxx}v_{xx}^5 - 225 u_{xxxx}^4v_{xx}^5) \big/
  (160u_{xx}^4v_{xxx}^4 \\ & - 640u_{xx}^3u_{xxx}v_{xx}v_{xxx}^3 + 960
  u_{xx}^2u_{xxx}^2v_{xx}^2v_{xxx}^2 - 640u_{xx}u_{xxx}^3v_{xx}^3v_{xxx} +
  160u_{xxx} ^4v_{xx}^4).
\end{align*}

\textbf{Acknowledgments.} Work supported by G.N.F.M. and by G.N.S.A.G.A of
I.N.d.A.M. We thank Paola Morando of the University of Milan for stimulating
discussions.


\end{document}